\documentclass{article}
\usepackage{fullpage}






\usepackage[utf8]{inputenc}
\usepackage[T1]{fontenc}    
\usepackage{hyperref}       
\usepackage{url}            
\usepackage{booktabs}     
\usepackage{amsfonts}       
\usepackage{nicefrac}       
\usepackage{microtype}      
\usepackage{amsfonts}       
\usepackage{nicefrac}       
\usepackage{microtype}      
\usepackage{times}
\usepackage{adjustbox}
\usepackage{tabularx}
\usepackage{amsmath}
\usepackage{amssymb}
\usepackage{amsthm}
\usepackage{blindtext}
\usepackage{enumitem}
\usepackage{multirow}
\usepackage{graphicx}
\usepackage{subcaption}
\usepackage{color}

\usepackage{algorithm}
\usepackage{algpseudocode}

\theoremstyle{definition}

\newtheorem{theorem}{Theorem}
\newtheorem{definition}[theorem]{Definition}

\newtheorem{lemma}[theorem]{Lemma}

\newtheorem{claim}[theorem]{Claim}
\newtheorem{fact}[theorem]{Fact}

\newcommand{\ceil}[1]{{\lceil#1\rceil}}

\newcommand{\opt}{\mathrm{opt}}

\newcommand{\tree}{\mathrm{tree}}
\newcommand{\cost}{\mathrm{cost}}
\newcommand{\calA}{\mathcal{A}}

\DeclareMathOperator*\E{\mathbb{E}}

\mathchardef\mhyphen="2D
\newcommand{\UFLtree}{{\mathrm{UFL}\mhyphen\mathrm{tree}}}
\newcommand{\DPUFLtree}{\mathrm{DP}\mhyphen{\mathrm{UFL}\mhyphen\mathrm{tree}}}
\newcommand{\minset}{{\mathrm{min}\mhyphen\mathrm{set}}}

\newcommand{\base}{\mathrm{base}}

\newcommand{\Lap}{{\mathrm{Lap}}}

\newcommand{\R}{{\mathbb{R}}}
\newcommand{\Z}{{\mathbb{Z}}}

\def \FV{1}

\title{Facility Location Problem in Differential Privacy Model Revisited}

\author{%
  Yunus Esencayi \thanks{Authors are alphabetically ordered. } \\
  SUNY at Buffalo \\
  \texttt{yunusese@buffalo.edu}
  \and
  Marco Gaboardi \\
Boston University \\
    \texttt{gaboardi@bu.edu}
  \and
  Shi Li \\
  SUNY at Buffalo \\
  \texttt{shil@buffalo.edu}
  \and 
  Di Wang \\
  SUNY at Buffalo \\
  \texttt{dwang45@buffalo.edu}
}
\date{}

\begin{document}
     \maketitle

    \begin{abstract}
  In this paper we study the uncapacitated facility location problem in the
  model of differential privacy (DP) with uniform facility
  cost. Specifically, we first show that, under the \emph{hierarchically well-separated tree (HST) metrics} and the \emph{super-set output setting} that was introduced in \cite{gupta2010differentially}, there is an $\epsilon$-DP
  algorithm that achieves an $O(\frac{1}{\epsilon})$
  (expected multiplicative) approximation ratio; this implies an
  $O(\frac{\log n}{\epsilon})$
  approximation ratio for the general metric case, where $n$ is the size of the input metric. These bounds improve
  the best-known results given by \cite{gupta2010differentially}.  In particular, our approximation ratio for HST-metrics is independent of $n$, and the ratio for general metrics is independent of the aspect ratio of the input metric.
  
  
  On the negative side, we show that the approximation ratio of any $\epsilon$-DP algorithm is lower bounded by $\Omega(\frac{1}{\sqrt{\epsilon}})$, even for instances on HST metrics with uniform facility cost, under the super-set output setting. The lower bound shows that the dependence of the approximation ratio for HST metrics on $\epsilon$ can not be removed or greatly improved. Our novel methods and techniques for both the upper and lower bound may find additional applications.
\end{abstract}
\vspace{-0.3cm}
\section{Introduction}
\vspace{-0.3cm}
The facility location problem is one of the most fundamental problems in combinatorial optimization and has a wide range of applications such as plant or warehouse location problems and network design problems, also it is closely related to clustering problems such as $k$-median, where one typically seeks to partition a set of data points, which themselves find applications in data mining, machine learning, and bioinformatics \cite{arya2004local,jain2001approximation,charikar1999improved}. Due to its versatility, the problem has been studied by both operations research and computer science communities \cite{thizy1994facility,shmoys1997approximation,li2019facility,li20111,arya2004local,jain2001approximation,charikar1999improved}. Formally, it can be defined as following. 
\begin{definition}[Uniform Facility Location Problem (Uniform-FL)]
	The input to the Uniform Facility Location (Uniform-FL) problem is a tuple $(V, d, f, \vec N)$, where $(V, d)$ is a $n$-point discrete metric, $f \in \R_{\geq 0}$ is the facility cost, and $\vec N = (N_v)_{v \in V} \in \Z_{\geq 0}^V$ gives the number of clients in each location $v \in V$. 
	The goal of the problem is to find a set of facility locations $S\subseteq V$ which minimize the following, where $d(v, S)=\min_{s\in S}d(v,s)$,
\begin{equation}\label{eq:1}
    \min_{S\subseteq V} \cost_d(S; \vec N) := |S| \cdot f + \sum_{v\in V}N_v d(v, S).
\end{equation}
The first term of (\ref{eq:1}) is called the \textit{facility cost} and the second term is called the \textit{connection cost}. 
\end{definition}
Throughout the paper, we shall simply use \textbf{UFL} to refer to Uniform-FL.  Although the problem has been studied quite well in recent years, there is some privacy issue on the locations of the clients. Consider the following scenario: One client may get worried that the other clients may be able to obtain some information on her location by colluding and exchanging their information. As a commonly-accepted approach for preserving privacy, Differential Privacy (DP) \cite{dwork2006calibrating}
provides provable protection against identification
and is resilient to arbitrary auxiliary information that might be available to attackers. 

However, under the $\epsilon$-DP model, Gupta et al.~\cite{gupta2010differentially} recently showed that it is impossible to achieve a useful multiplicative approximation ratio of the facility location problem. Specifically, they  showed that any 1-DP algorithm for UFL under general metric that outputs the set of open facilities must have a (multiplicative) approximation ratio of $\Omega(\sqrt{n})$ which negatively shows that UFL in DP model is useless. Thus one needs to consider some relaxed settings in order to address the issue. 

In the same paper \cite{gupta2010differentially} the authors showed that, under the following setting, an $O(\frac{\log^2  n\log^2 \Delta}{\epsilon})$ approximation ratio under the $\epsilon$-DP model is possible, where $\Delta=\max_{u, v\in V}d(u,v)$ is the diameter of the input metric.  In the setting, the output is a set $R \subseteq V$, which is a \emph{super-set} of the set of open facilities.  Then every client sees the output $R$ and chooses to connect to its nearest facility in $R$. The the actual set $S$ of open facilities, is the facilities in $R$ with at least one connecting client. {\bf Thus, in this model, a client will only know its own service facility, instead of the set of open facilities.}   We call this setting the \emph{super-set output setting}. Roughly speaking, under the $\epsilon$-DP model, one can not well distinguish between if there is 0 or 1 client at some location $v$. If $v$ is far away from all the other locations, then having one client at $v$ will force the algorithm to open $v$ and thus will reveal information about the existence of the client at $v$. This is how the lower bound in \cite{gupta2010differentially} was established. By using the super-set output setting, the algorithm can always output $v$ and thus does not reveal much information about the client. If there is no client at $v$ then $v$ will not be open. 
\vspace{-3pt}

In this paper we further study the UFL problem in the $\epsilon$-DP model with the super-set output setting by \cite{gupta2010differentially} we address the following questions.
\begin{quote}
For the UFL problem under the $\epsilon$-DP model and the super-set output setting, can we do better than the results in \cite{gupta2010differentially} in terms of the approximation ratio? Also, what is the lower bound of the approximation ratio in the same setting?
\end{quote}
We make progresses on both problems. Our contributions can be summarized as the followings.
\vspace{-8pt}
\begin{itemize}[leftmargin=*,itemsep=-1pt,listparindent=-1pt]
    \item We show that under the so called \emph{Hierarchical-Well-Separated-Tree (HST)} metrics, there is an algorithm that achieves $O(\frac{1}{\epsilon})$ approximation ratio.  By using the classic FRT tree embedding technique of \cite{fakcharoenphol2004tight}, we can achieve $O(\frac{\log n}{\epsilon})$ approximation ratio for any metrics, under the $\epsilon$-DP model and the super-set output setting. These factors respectively improve upon a factor of $O(\log n\log^2 \Delta)$ in \cite{gupta2010differentially} for HST and general metrics.  Thus, for HST-metrics, our approximation only depends on $\epsilon$. For general metrics, our result removed the poly-logarithmic dependence on $\Delta$ in \cite{gupta2010differentially}.
    \item On the negative side, we show that the approximation ratio under $\epsilon$-DP model is lower bounded by $\Omega(\frac{1}{\sqrt{\epsilon}})$ even if the metric is a star (which is a special case of a HST). This shows that the  dependence on $\epsilon$ is unavoidable and can not been improved greatly.
\end{itemize}
\vspace{-8pt}

%
%
\paragraph{Related Work}
The work which is the most related to this paper is \cite{gupta2010differentially}, where the author first studied the problem. 
Nissim et al.~\cite{NissimST12} study an abstract mechanism design model where DP is used to design approximately optimal mechanism, and they use facility location  as one of their key examples.
The UFL problem has  close connection to $k$-median clustering and submodular optimization, whose DP versions have been studied before such as \cite{mitrovic2017differentially,cardoso2019differentially,feldman2009private,gupta2010differentially,balcan2017differentially}. However, their methods cannot be used in our problem. There are many papers study other combinatorial optimization problems in DP model such as \cite{hsu2016private,hsu2016jointly,huang2018near,huang2019scalable,gupta2010differentially}. Finally, we remark that the setting we considered in the paper is closely related to the \emph{Joint Differential Privacy Model} that was introduced in \cite{kearns2014mechanism}. 
\ifdefined\CR
We leave the details to the full version of the paper. 
\else
We leave the details to Appendix~\ref{appendix:connection-JDP}.
\fi

\vspace{-0.3cm}
\section{Preliminaries}
\vspace{-0.3cm}

Given a data universe $V$ and a dataset $D=\{v_1, \cdots, v_N\}\in V^N$ where each record $v_i$ belongs to an individual $i$ whom we refer as a client in this paper. Let $\mathcal{A}: V^N\mapsto \mathcal{S}$ be an algorithm on $D$ and produce an output in $\cal S$. 
Let $D_{-i}$ denote the dataset $D$ without entry of the $i$-th client. 
Also $(v'_i, D_{-i})$ denote the dataset by adding $v'_i$ to $D_{-i}$. 
\begin{definition}[Differential Privacy \cite{dwork2006calibrating}]\label{def:dp}
	A randomized algorithm $\mathcal{A}$ is $\epsilon$-differentially private (DP) if for any client $i\in [N]$, any two possible data entries $v_i, v_i'\in V$, any dataset $D_{-i}\in V^{N-1}$ and for all events $\cal T\subseteq \mathcal{S}$ in the output space of $\mathcal{A}$, we have 
	$\Pr[\mathcal{A}(v_i, D_{-i})\in {\cal T}]\leq e^{\epsilon} \Pr[\mathcal{A}(v_i', D_{-i})\in {\cal T}]$.
\end{definition}

For the UFL problem, instead of using a set $D$ of clients as input, it is more convenient for us to use a vector $\vec N  = (N_v)_{v \in V}\in \Z_{\geq 0}^V$, where $N_v$ indicates the number of clients at location $v$.  Then the $\epsilon$-DP requires that for any input vectors $\vec N$ and $\vec N'$ with $|\vec N - \vec N'|_1 = 1$ and any event ${\cal T}\subseteq {\cal S}$, we have $\Pr[\mathcal{A}(\vec N)\in {\cal T}]\leq e^{\epsilon} \Pr[\mathcal{A}(\vec N')\in {\cal T}]$.

In the super-set output setting for the UFL problem, the output of an algorithm is a set $R \subseteq V$ of potential open facilities. Then, every client, or equivalently, every location $v$ with $N_v \geq 1$, will be connected to the nearest location in $R$ under some given metric (in our algorithm, we use the HST tree metric). Then the actual set $S$ of open facilities is the set of locations in $R$ with at least 1 connected client.  Notice that the connection cost of $S$ will be the same as that of $R$; but the facility cost might be much smaller than that of $R$. This is why the super-set output setting may help in getting good approximation ratios. 

{\bf Throughout the paper, approximation ratio of an algorithm $\mathcal{A}$ is the \textit{expected multiplicative approximation ratio}, which is the expected cost of the solution given by the algorithm, divided by the cost of the optimum solution}, {\em i.e.,} $\frac{\mathbb{E}\cost_d(\mathcal{A}(\vec N); \vec N)}{ \min_{S\subseteq V} \cost_d(S; \vec N)} $, where the expectation is over the randomness of $\mathcal{A}$. 
\vspace{-0.2cm}
\paragraph{Organization} In Section~\ref{sec:reduction}, we show how to reduce UFL on general metrics to that on HST metrics, while losing a factor of $O(\log n)$ in the approximation ratio. In Section~\ref{sec:tree}, we give our $\epsilon$-DP $O(1/\epsilon)$-approximation for UFL under the super-set output setting. Finally in Section~\ref{sec:lowerbound}, we prove our $\Omega(1/\sqrt{\epsilon})$-lower bound on the approximation ratio for the same setting.
\ifdefined\CR
All missing proofs will be deferred to the full version of the paper. 
\fi

    \vspace{-0.4cm}
	\section{Reducing General Metrics to Hierarchically Well-Separated Tree Metrics} \label{sec:reduction}
	\vspace{-0.3cm}
		The classic result of Fakcharoenphol, Rao and Talwar (FRT) \cite{fakcharoenphol2004tight} shows that any metric on $n$ points can be embedded into a distribution of metrics induced by \emph{hierarchically well-separated trees} with distortion $O(\log n)$. As in \cite{gupta2010differentially}, this tree-embedding result is our starting point for our DP algorithm for uniform UFL. To apply the technique, we first define what is a hierarchically well-separated tree. 
		\begin{definition}
			\label{def:HST}
			For any real number $\lambda > 1$, an integer $L \geq 1$, a $\lambda$-Hierarchically Well-Separated tree ($\lambda$-HST) of depth $L$ is an edge-weighted rooted tree $T$ satisfying the following properties: 
			\begin{enumerate}[label=(\ref{def:HST}\alph*)]
				\item Every root-to-leaf path in $T$ has exactly $L$ edges.
				\item If we define the level of a vertex $v$ in $T$ to be $L$ minus the number of edges in the unique  root-to-$v$ path in $T$, then an edge between two vertices of level $\ell$ and $\ell+1$ has weight $\lambda^\ell$.  
			\end{enumerate}
		\end{definition}
		Given a $\lambda$-HST $T$, we shall always use $V_T$ to denote its vertex set. For a vertex $v \in V_T$, we let $\ell_T(v)$ denote the \emph{level of $v$} using the definition in (\ref{def:HST}b). Thus, the root $r$ of $T$ has level $\ell_T(r) = L$ and every leaf $v \in T$ has level $\ell_T(v) = 0$.  For every $u, v \in V_T$, define $d_T(u, v)$ be the total weight of edges in the unique path from $u$ to $v$ in $T$. So $(V_T, d_T)$ is a metric. 
		With the definitions, we have:
		\begin{fact}
			Let $u \in V_T$ be a non-leaf of $T$ and $v \neq u$ be a descendant of $u$, then 
			\begin{align*} \textstyle
				\lambda^{\ell_T(u)-1} \leq d_T(u, v) \leq \frac{\lambda^{\ell_T(u)}-1}{\lambda-1} \leq \frac{\lambda^{\ell_T(u)}}{\lambda-1}.
			\end{align*}
		\end{fact}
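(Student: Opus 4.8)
The plan is to evaluate $d_T(u,v)$ exactly as a geometric sum over the edges of the unique $u$–$v$ path in $T$, and then sandwich that sum between its top term and the corresponding full geometric series.

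First I would record the basic structure. Write $\ell := \ell_T(u)$; since $u$ is a non-leaf, $\ell \ge 1$. Because $v \ne u$ is a descendant of $u$, the unique path from $u$ to $v$ in $T$ goes strictly downward, visiting exactly one vertex of each level $\ell, \ell-1, \dots, \ell_T(v)$, where $0 \le \ell_T(v) \le \ell - 1$. By property (\ref{def:HST}b), the edge joining the level-$j$ and level-$(j{+}1)$ vertices on this path has weight $\lambda^{j}$, so, writing $m := \ell_T(v)$,
\[
 d_T(u,v) \;=\; \sum_{j=m}^{\ell-1} \lambda^{j}.
\]

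The two bounds are then immediate. For the lower bound, the index $j=\ell-1$ appears in the sum (as $m \le \ell-1$) and every term is positive since $\lambda>1$, hence $d_T(u,v) \ge \lambda^{\ell-1} = \lambda^{\ell_T(u)-1}$. For the upper bound, $m \ge 0$ means all indices lie in $\{0,1,\dots,\ell-1\}$, so $d_T(u,v) \le \sum_{j=0}^{\ell-1}\lambda^{j} = \frac{\lambda^{\ell}-1}{\lambda-1}$, using the finite geometric series formula together with $\lambda - 1 > 0$; and $\frac{\lambda^{\ell}-1}{\lambda-1} = \frac{\lambda^{\ell}}{\lambda-1} - \frac{1}{\lambda-1} \le \frac{\lambda^{\ell}}{\lambda-1}$ gives the last inequality.

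There is no real obstacle here; the only point requiring a bit of care is the level bookkeeping — confirming that the path from $u$ down to $v$ uses precisely the edge weights $\lambda^{\ell-1}, \lambda^{\ell-2}, \dots, \lambda^{m}$ (one per adjacent pair of levels, indexed by the smaller level), so that the index set of the sum is contained in $\{0,\dots,\ell_T(u)-1\}$ and the two crude estimates — keeping only the top term, and extending to a full geometric series starting at $\lambda^{0}$ — are both legitimate.
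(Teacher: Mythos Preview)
Your proof is correct. The paper states this result as a Fact without proof, since it follows immediately from the definition of a $\lambda$-HST; your geometric-sum computation is exactly the intended (and only natural) argument.
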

	
		We say a metric $(V, d)$ is a $\lambda$-HST metric for some $\lambda > 1$ if there exists a $\lambda$-HST $T$ with leaves being $V$ such that $(V, d) \equiv (V, d_T|_V)$, where $d_T|_V$ is the function $d_T$ restricted to pairs in $V$.  Throughout the paper, we guarantee that if a metric is a $\lambda$-HST metric, the correspondent $\lambda$-HST $T$ is given. We give the formal description of the FRT result as well how to apply it to reduce UFL on general metrics to that on $O(1)$-HST metrics in
\ifdefined\CR
		the full version of the paper.
\else
    Appendix~\ref{appendix:reducing-to-HST}.
\fi
		
		Specifically, we shall prove the following theorem: 		
		\begin{theorem}
			\label{thm:reduction}
			Let $\lambda > 1$ be any absolute constant. If there exists an efficient $\epsilon$-DP $\alpha_\tree(n, \epsilon)$-approximation algorithm $\calA$ for UFL on $\lambda$-HST's under the super-set output setting, then there exists an efficient $\epsilon$-DP $O(\log n)\cdot \alpha_\tree(n, \epsilon)$-approximation algorithm for UFL on general metrics under the same setting.
		\end{theorem}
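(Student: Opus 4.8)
The plan is to invoke the FRT embedding as a black box and compose it with $\calA$ in the natural way. Two observations make this essentially automatic: the embedding step uses no information about $\vec N$, so $\epsilon$-DP is inherited with \emph{no} loss in the privacy parameter; and in the final super-set solution the clients must be reconnected under the \emph{tree} metric $d_T$ rather than the original metric $d$, which is what lets the facility cost transfer cleanly.

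\textbf{The reduction.} Given an instance $(V,d,f,\vec N)$ with $|V|=n$, first sample a $\lambda$-HST $T$ from the FRT distribution for $(V,d)$; by the (standard, $\lambda$-rounded) FRT guarantee $T$ can be taken to have leaf set exactly $V$, so $d_T|_V$ is a $\lambda$-HST metric on the same $n$ points and satisfies (i) $d_T(u,v)\ge d(u,v)$ for all $u,v\in V$, and (ii) $\E_T[d_T(u,v)]\le O(\log n)\,d(u,v)$. Then run $\calA$ on $(V,d_T|_V,f,\vec N)$ to obtain a super-set $R\subseteq V$, output $R$, and let each client at $v$ (with $N_v\ge1$) connect to its $d_T$-nearest point of $R$. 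For privacy, note the distribution of $T$ depends only on $(V,d)$; fixing any realization of $T$, the map $\vec N\mapsto R$ is exactly $\calA$, hence $\epsilon$-DP, so averaging $\Pr[\calA_T(\vec N)\in\mathcal{T}]\le e^{\epsilon}\Pr[\calA_T(\vec N')\in\mathcal{T}]$ over $T$ shows the composed algorithm is $\epsilon$-DP. Polynomial running time is immediate since both FRT sampling and $\calA$ are efficient.

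\textbf{Approximation.} Chain three estimates. (a) Let $S\subseteq R$ be the set of facilities actually opened (those points of $R$ receiving at least one client under $d_T$). The cost incurred on $(V,d)$ is $\cost_d(S;\vec N)=|S|f+\sum_v N_v d(v,S)\le |S|f+\sum_v N_v d_T(v,S)=\cost_{d_T}(S;\vec N)$ by property (i); moreover $\cost_{d_T}(S;\vec N)$ is precisely the cost of $\calA$'s super-set solution on $(V,d_T|_V,f,\vec N)$, since a client's $d_T$-nearest point of $R$ always lies in $S$ and so $\sum_v N_v d_T(v,R)=\sum_v N_v d_T(v,S)$. (b) For each fixed $T$, $\calA$'s guarantee gives $\E_\calA[\cost_{d_T}(S;\vec N)]\le \alpha_\tree(n,\epsilon)\cdot\opt_{d_T}(\vec N)$, where $\opt_{d_T}(\vec N):=\min_{S'\subseteq V}\cost_{d_T}(S';\vec N)$. (c) Let $S^\ast$ attain $\opt_d(\vec N)$ and, for each $v$, let $s^\ast(v)\in S^\ast$ be a $d$-closest facility to $v$; then $\opt_{d_T}(\vec N)\le\cost_{d_T}(S^\ast;\vec N)\le |S^\ast|f+\sum_v N_v d_T(v,s^\ast(v))$, so taking $\E_T$ and applying (ii) yields $\E_T[\opt_{d_T}(\vec N)]\le |S^\ast|f+O(\log n)\sum_v N_v d(v,s^\ast(v))=O(\log n)\,\opt_d(\vec N)$. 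Combining (a)--(c) via the law of total expectation gives $\E[\cost_d(S;\vec N)]\le\alpha_\tree(n,\epsilon)\cdot\E_T[\opt_{d_T}(\vec N)]\le O(\log n)\,\alpha_\tree(n,\epsilon)\cdot\opt_d(\vec N)$, as claimed.

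\textbf{Main obstacle.} The one genuinely delicate point is step (a): the output solution must reconnect clients under $d_T$, not under $d$. If clients reconnect under $d$, the set of opened facilities generally differs from $\calA$'s opened set $S$, and its size --- hence its facility cost --- is no longer controlled by $\alpha_\tree\cdot\opt_{d_T}$; reconnecting under $d_T$ forces the opened set to coincide with $\calA$'s, so the facility cost transfers verbatim while property (i) is used only to bound the connection cost. A secondary point to handle carefully in (c) is that $\E_T$ cannot be pushed inside the minimum defining $d_T(v,S^\ast)$, which is why one fixes the $d$-optimal assignment $s^\ast(v)$ before taking the expectation.
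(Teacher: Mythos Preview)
Your proof is correct and follows essentially the same route as the paper: sample an FRT $\lambda$-HST $T$ independently of $\vec N$, run $\calA$ on $(V,d_T|_V,f,\vec N)$, reconnect clients under $d_T$, and chain the non-contraction bound $\cost_d(S)\le\cost_{d_T}(S)$ with $\E_T[\cost_{d_T}(S^\ast)]\le O(\log n)\,\cost_d(S^\ast)$ via the $\alpha_\tree$ guarantee. Your explicit remarks about reconnecting under $d_T$ (so that the opened set and hence the facility cost match $\calA$'s) and about fixing the assignment $s^\ast(v)$ before applying $\E_T$ are exactly the two subtleties the paper flags (the first in a footnote, the second implicitly), so there is nothing to add.
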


	In Section~\ref{sec:tree}, we shall show that it is possible to make $\alpha_{\tree}(n, \epsilon) = O(1/{\epsilon})$:
	\begin{theorem}
		\label{thm:UFL-tree}
		For every small enough absolute constant $\lambda > 1$,  there is an efficient $\epsilon$-DP $O(1/{\epsilon})$-approximation algorithm for UFL on $\lambda$-HST metrics under the super-set output setting.
	\end{theorem}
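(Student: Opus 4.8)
The plan is to replace the true cost by a scale-decomposed surrogate that depends on a solution only through which subtrees of the given $\lambda$-HST $T$ contain a facility, to solve that surrogate by a bottom-up threshold rule on Laplace-perturbed subtree counts, and to use the super-set output so that over-inclusion of facility representatives is free on empty subtrees. First, for $u\in V_T$ write $n_u=\sum_{v\in V\cap T_u}N_v$ and let $a_\ell(v)$ be the level-$\ell$ ancestor of a leaf $v$. Using Fact~1 one checks that if $\ell^*$ is the least level with $T_{a_{\ell^*}(v)}\cap S\neq\emptyset$ then $d_T(v,S)$ lies between $\frac{\lambda^{\ell^*}-1}{\lambda-1}$ and twice that, and $\frac{\lambda^{\ell^*}-1}{\lambda-1}=\sum_{\ell\ge 0}\lambda^{\ell}\,\mathbf 1[T_{a_\ell(v)}\cap S=\emptyset]$; summing over clients,
\[
  \widetilde\cost_d(S;\vec N)\ :=\ |S|\,f\ +\ \sum_{\ell\ge 0}\lambda^{\ell}\sum_{u\,:\,\ell_T(u)=\ell,\ T_u\cap S=\emptyset} n_u
\]
satisfies $\widetilde\cost_d(S;\vec N)\le\cost_d(S;\vec N)\le O_\lambda(1)\cdot\widetilde\cost_d(S;\vec N)$ and depends on $S$ only through the upward-closed family $\{u:T_u\cap S\neq\emptyset\}$. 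So it suffices to pick such a family in an $\epsilon$-DP way and within an $O(1/\epsilon)$ factor on $\widetilde\cost$, and then output as $R$ one data-independent leaf representative $r_u$ of every minimal member of the chosen family; each client connects to its nearest point of $R$, and since $r_u\in T_u$ for $v$'s minimal occupied ancestor $T_u$, the induced solution's true cost is $O_\lambda(1)\cdot\widetilde\cost_d(R;\vec N)$.

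Next, the algorithm. Let $\ell_0$ be the least level with $\lambda^{\ell_0}\ge f$. Not occupying a level-$\ell$ subtree $T_u$ costs $\lambda^{\ell}n_u$ in the surrogate (all $n_u$ clients cross the level-$\ell$ edge), whereas occupying it costs $f$; so the non-private optimum occupies $T_u$ essentially iff $\lambda^{\ell}n_u\ge f$ --- always when $\ell\ge\ell_0$, and iff $n_u\ge f/\lambda^{\ell}$ when $\ell<\ell_0$ --- with a trivial per-node comparison over children so that one facility is not paid for twice inside nested occupied subtrees. The private version applies the same rule to a perturbed count $\hat n_u=n_u+\Lap(O(1/\epsilon))$. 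The delicate point is privacy: a single client lies in all $L$ nested subtrees along its root-to-leaf path, so naive per-vertex noise would give only $(L\epsilon)$-privacy; but along every root-to-leaf path the occupancy decision flips exactly once (counts shrink and the thresholds $f/\lambda^\ell$ grow as one descends), so the whole decision vector on a path is determined by a single crossing point, and injecting $O(1/\epsilon)$ noise so that moving one client is absorbed by a unit shift of that path's crossing keeps the privacy loss $O(\epsilon)$ independent of $L$. Finally, the super-set output handles mistakes: if $\hat n_u$ makes us place $r_u\in R$ but $n_u=0$, then $r_u$ attracts no client, is never opened, and costs nothing, so only spurious representatives inside nonempty subtrees matter.

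Next, the approximation bound, obtained by comparing with a fixed optimal $S^*$ level by level. For the top of the tree ($\ell\ge\ell_0$) our total cost is $f$ times the number of nonempty level-$\ell_0$ subtrees (plus their $O_\lambda(f)$-per-subtree connection cost), and $S^*$ pays at least $\min(f,\lambda^{\ell_0}n_u)=f$ per such subtree on disjoint parts of its solution, so this regime is $O(1)\cdot\opt$. Symmetrically, an error where we fail to occupy a $T_u$ that $S^*$ occupies requires $\hat n_u<f/\lambda^{\ell}$, hence $n_u$ at most $f/\lambda^{\ell}+O(1/\epsilon)$, so optimality of $S^*$ forces $S^*$ to pay $\Omega(f)$ for that (near-)isolated cluster while our missed clients are re-routed up only until caught, at cost $O_\lambda(f)$ --- an $O(1)$ loss. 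The genuinely $\epsilon$-sensitive case is a spurious occupation: for $\ell<\ell_0$, set $t=f/\lambda^{\ell}$; if $S^*$ leaves a nonempty $T_u$ unoccupied it contributes $\lambda^{\ell}n_u$ to its connection cost, while we occupy $T_u$ wrongly only when $\Lap(O(1/\epsilon))\ge t-n_u$, with probability $O(e^{-\Omega(\epsilon)(t-n_u)})$ and extra cost $O(f)=O(t\lambda^{\ell})$; hence the ratio of expected spurious cost to $S^*$'s contribution from $T_u$ is $O\!\big(\tfrac{t}{n_u}e^{-\Omega(\epsilon)(t-n_u)}\big)$, which is $O(1/\epsilon)$ uniformly over integers $1\le n_u< t$ (it is $O(1)$ for $n_u$ near $t$ and is worst, $\Theta(1/\epsilon)$, at $n_u=1$ with $\epsilon t\approx1$). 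Summing numerators over all bad $T_u$ and denominators over all $T_u$ that $S^*$ leaves unoccupied at level $\ell$, then over $\ell$, and using that $\sum_{\ell}\lambda^{\ell}\sum_{u:T_u\cap S^*=\emptyset}n_u$ is exactly $S^*$'s surrogate connection cost $\le\opt$ --- the geometric decay of the edge weights is precisely what collapses these per-level terms to $O(1/\epsilon)$ rather than $O(L/\epsilon)$ --- we obtain $\E[\widetilde\cost_d(R;\vec N)]=O(\opt/\epsilon)$, hence $\E[\cost_d(R;\vec N)]=O_\lambda(\opt/\epsilon)$, and $\lambda$ is a constant.

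The two technical hearts, where I expect most of the work to go, are: (i) the privacy accounting --- making rigorous that the occupancy decisions along all the overlapping root-to-leaf paths can be released with total loss $O(\epsilon)$ despite each client sitting in $L$ nested counts and despite moving a client perturbing counts along two pendant subtrees rather than a single path; and (ii) organizing the level-by-level comparison so that each facility and each client of $S^*$ is charged at a single, critical scale, so that the geometric sum of the per-level $O(1/\epsilon)$ losses --- together with the re-routing of ``missed'' clients and the connection-cost bookkeeping inside refined subtrees --- stays $O(1/\epsilon)$. I expect (ii), the clean scale-disjoint charging, to be the real crux.
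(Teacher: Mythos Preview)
Your overall architecture matches the paper's: mark each vertex $u$ when (a noisy version of) $n_u\lambda^{\ell(u)}$ exceeds $f$, return as $R$ the minimal marked vertices, and exploit the super-set rule so that representatives landing in empty subtrees never open. The surrogate $\widetilde\cost$ is a clean way to frame the accounting; the paper does the same thing implicitly via the per-subtree lower bound $B_u=\min\{f,N_u\lambda^{\ell(u)}\}$ on $\opt$.

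The genuine gap is the privacy mechanism. With uniform $\Lap(O(1/\epsilon))$ noise at every vertex, the ``single crossing point'' heuristic does not yield $\epsilon$-DP: once independent noise is added, the noisy indicators $[\hat n_u\ge f/\lambda^{\ell}]$ need not be monotone along a path, and even if you post-process to extract one crossing level per path, the output still depends on all $\Theta(\ell_0)$ independent noisy counts along the affected client's path, so composition gives $\Theta(\ell_0\,\epsilon)$ privacy, not $O(\epsilon)$. (Perturbing the crossing level directly does not work either: its sensitivity to a single client is not $O(1)$ in general, and the crossing levels at different leaves are coupled through shared ancestors.) The paper's fix is precisely the missing idea: use \emph{level-dependent} noise $\Lap\big(f/(c\,\eta^{L'+\ell})\big)$ with $\eta=\sqrt{\lambda}$, and set the cutoff at $L'$ with $\lambda^{L'}\approx \epsilon f$ (not $f$ as in your $\ell_0$). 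The per-level privacy losses then form a geometric series summing to $\epsilon$; and because $\eta<\lambda$, the noise grows more slowly than the thresholds $f/\lambda^\ell$ as one descends, so the approximation analysis still closes at $O(1/\epsilon)$. Your point~(ii), the scale-disjoint charging, turns out to be comparatively routine in the paper (a short argument via $\minset$ together with a comparison to the base algorithm run on input $2\vec N$); it is point~(i), with this geometric noise scaling, that actually carries the proof.
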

	Combining Theorems~\ref{thm:reduction} and \ref{thm:UFL-tree} will give our main theorem.
\begin{theorem}[Main Theorem]\label{main:thm}
Given any UFL tuple $(V, d, f, \vec N)$ where $|V|=n$ and $\epsilon>0$, there is an efficient $\epsilon$-DP algorithm $\mathcal{A}$ in the super-set output setting achieving an approximation ratio of $O(\frac{\log n}{\epsilon})$. 
\end{theorem}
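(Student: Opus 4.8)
The plan is to obtain Theorem~\ref{main:thm} by simply chaining the two results already in hand. Fix $\lambda>1$ to be a constant small enough that Theorem~\ref{thm:UFL-tree} applies; since this $\lambda$ is absolute, it is also a legal choice in Theorem~\ref{thm:reduction}. Theorem~\ref{thm:UFL-tree} then supplies an efficient $\epsilon$-DP algorithm $\calA$ for UFL on $\lambda$-HST metrics in the super-set output setting with approximation ratio $\alpha_{\tree}(n,\epsilon)=O(1/\epsilon)$. Feeding this $\calA$ into Theorem~\ref{thm:reduction} yields an efficient $\epsilon$-DP algorithm for UFL on an arbitrary $n$-point metric $(V,d)$ in the same setting, with approximation ratio $O(\log n)\cdot\alpha_{\tree}(n,\epsilon)=O(\log n)\cdot O(1/\epsilon)=O(\frac{\log n}{\epsilon})$. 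Both the $\epsilon$-DP guarantee and the polynomial running time survive the composition because Theorem~\ref{thm:reduction} is stated to preserve exactly these properties, so no extra bookkeeping is needed; this is the entire proof of the Main Theorem.

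What actually needs to be proved lives in the two component theorems. For Theorem~\ref{thm:reduction} I would follow the FRT route of \cite{fakcharoenphol2004tight}: sample a $\lambda$-HST $T$ whose leaf set is $V$ from the FRT distribution, so that $d(u,v)\le d_T(u,v)$ for all $u,v\in V$ and $\E_T[d_T(u,v)]=O(\log n)\,d(u,v)$; run $\calA$ on $(V_T,d_T,f,\vec N)$ to get a super-set $R\subseteq V$ of leaves; and output $R$. Privacy is inherited from $\calA$ because $T$ depends only on $(V,d)$ and not on $\vec N$: for each fixed $T$ the map $\vec N\mapsto R$ is exactly $\calA$ and hence $\epsilon$-DP, and a mixture over a data-independent distribution of such mechanisms is again $\epsilon$-DP. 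The approximation bound follows because the facility cost of any leaf set is identical in the two metrics, the connection cost in $(V,d)$ is dominated by that in $(V_T,d_T)$, and the expected optimum on $T$ is within $O(\log n)$ of the optimum on $(V,d)$ by the distortion bound and linearity of expectation.

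For Theorem~\ref{thm:UFL-tree}, the plan is a single recursive sweep of the given $\lambda$-HST: at each internal vertex $v$ at level $\ell$, form a noised count $\widetilde N_v$ of the number of clients in the subtree below $v$ by adding Laplace noise of scale $O(1/\epsilon)$, and open a representative leaf inside that subtree whenever $\widetilde N_v$ exceeds a threshold of order $f/\lambda^{\ell}$ (the break-even point between paying $f$ to open inside the subtree and routing everything one level up). Because we are in the super-set output setting, a spuriously opened representative is charged $f$ only when some client actually connects to it, which is precisely what makes an $O(1/\epsilon)$-scale perturbation affordable; the approximation analysis is a charging argument comparing, level by level, the cost of this rule against a per-level lower bound on $\opt$, with the noise contributing only the $O(1/\epsilon)$ factor.

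The genuinely hard step is the analysis behind Theorem~\ref{thm:UFL-tree}: one must simultaneously control the expected extra facility cost from ``false-positive'' openings (subtrees opened because of noise even though they contain few clients) and the extra connection cost from ``false-negative'' non-openings (subtrees that $\opt$ serves locally but whose noised count keeps them closed), and show that both are only an $O(1/\epsilon)$ multiple of $\opt$. The super-set relaxation is what keeps the false positives cheap, but making the two sides of the charging argument line up across all $L$ levels of the tree --- without the bound degrading with $L$, with $n$, or with the aspect ratio --- is where the real work is. By contrast, assembling Theorem~\ref{main:thm} from Theorems~\ref{thm:reduction} and \ref{thm:UFL-tree} is immediate.
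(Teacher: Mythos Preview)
Your derivation of Theorem~\ref{main:thm} is exactly the paper's: the paper states it as the immediate combination of Theorems~\ref{thm:reduction} and~\ref{thm:UFL-tree}, and your first paragraph carries this out correctly, including the observation that both the $\epsilon$-DP guarantee and efficiency are preserved by the reduction.

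One remark on your side sketch of Theorem~\ref{thm:UFL-tree}, since it departs from the paper in a way that matters: the paper does \emph{not} add Laplace noise of a uniform scale $O(1/\epsilon)$ at every node. A single client affects $N_v$ for all of its $\Theta(L')$ ancestors, so uniform noise would force either $\epsilon' = \Theta(\epsilon/L')$ per level (degrading the ratio to $O(L'/\epsilon)$) or a loss of the $\epsilon$-DP guarantee. Instead, Algorithm~\ref{alg:final} uses the level-dependent scale $f/(c\,\eta^{L'+\ell(v)})$ with $\eta=\sqrt\lambda$, so that the per-level privacy losses form a geometric series summing to $\epsilon$, while the noise grows more slowly (factor $\eta$) than the distance scale (factor $\lambda$) as one descends, keeping its cost impact bounded independently of the depth. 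This geometric noise schedule is the device that removes the $\log\Delta$ factors of~\cite{gupta2010differentially}, and your sketch would need it to achieve the claimed $O(1/\epsilon)$.
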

\vspace{-0.3cm}
\section{$\epsilon$-DP Algorithm with $O(1/{\epsilon})$ Approximation Ratio for HST Metrics}
\label{sec:tree}
\vspace{-0.3cm}
	In this section, we  prove Theorem~\ref{thm:UFL-tree}.  Let $\lambda \in (1, 2)$ be any absolute constant and let $\eta = \sqrt{\lambda}$. We prove the theorem for this fixed $\lambda$.  So we are given a $\lambda$-HST $T$ with leaves being $V$. Our goal is to design an $\epsilon$-DP  $\big(\alpha_\tree = O(1/{\epsilon})\big)$-approximation algorithm for UFL instances on the metric $(V, d_T)$.  Our input vector is $\vec{N} = (N_v)_{v \in V}$, where $N_v \in \Z_{\geq 0}$ is the number of clients at the location $v \in V$. 
	\subsection{Useful Definitions and Tools}
		Before describing our algorithm, we introduce some useful definitions and tools. Recall that $V_T$ is the set of vertices in $T$ and $V \subseteq V_T$ is the set of leaves.  Since we are dealing with a fixed $T$ in this section, we shall use $\ell(v)$ for $\ell_T(v)$.    Given any $u \in V_T$, we use $T_u$ to denote the sub-tree of $T$ rooted at $u$.    Let $L \geq 1$ be the depth of $T$; we assume $L \geq \log_{\lambda}(\epsilon f)$.\footnote{If this is not the case, we can repeat the following process many steps until the condition holds: create a new root for $T$ and let the old root be its child. } We use $L' = \max\{0, \ceil{\log_{\lambda}(\epsilon f)}\} \leq L$ to denote the smallest non-negative integer $\ell$ such that $\lambda^\ell \geq \epsilon f$.

	We extend the definition of $N_u$'s to non-leaves $u$ of $T$: For every $u \in V_T \setminus V$, let $N_u = \sum_{v \in T_u \cap V}N_v$ to be the total number of clients in the tree $T_u$.    

	We can assume that facilities can be built at any location $v \in V_T$ (instead of only at leaves $V$): On one hand, this assumption enriches the set of valid solutions  and thus only decreases the optimum cost. On the other hand, for any $u \in V_T$ with an open facility, we can move the facility to any leaf $v$ in $T_u$. Then for any leaf $v' \in V$, it is the case that $d(v', v)   \leq 2d(v', u)$. Thus moving facilities from $V_T\setminus V$ to $V$ only incurs a factor of $2$ in the connection cost.

	An important function that will be used throughout this section is the following set of \emph{minimal vertices}: 	
	\begin{definition}
			For a set $M \subseteq T$ of vertices in $T$, let 
			\begin{align*}
				\minset(M) := \{u \in M: \forall v \in T_u \setminus \{u\}, v \notin M\}.
			\end{align*}
	\end{definition}

	For every $v$, let we define $B_v := \min\{f, N_v \lambda^{\ell(v)}\}$. This can be viewed as a lower bound on the cost incurred inside the tree $T_v$, as can be seen from the following claim: 
	\begin{claim}
		\label{claim:lower-bound-opt}
		Let $V' \subseteq V_T$ be a subset of vertices that does not contain an ancestor-descendant pair\footnote{This means for every two distinct vertices $u, v \in V'$, $u$ is not an ancestor of $v$}.  Then we have  $\opt \geq \sum_{v \in V'}B_v$. 
	\end{claim}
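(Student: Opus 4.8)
The plan is to fix an optimal solution $S \subseteq V_T$ — using the relaxation noted above that a facility may be placed at any vertex of $T$, since a lower bound against this larger solution space only implies one against the original problem — and then to charge to each $v \in V'$ a distinct piece of $\cost_d(S; \vec N)$ of value at least $B_v$. The structural fact that makes the charging clean is that, because $V'$ contains no ancestor--descendant pair, the subtrees $\{T_v\}_{v \in V'}$ are pairwise disjoint: any common vertex of $T_u$ and $T_v$ would be a common descendant of $u$ and $v$, which forces $u$ and $v$ to lie on a single root-to-vertex path in $T$ and hence to be comparable, a contradiction.

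Next I would bound, for a single $v \in V'$, the cost that $S$ incurs ``inside'' $T_v$ — namely $|S \cap T_v|\cdot f$ plus the connection cost of the $N_v$ clients located at leaves of $T_v$ — from below by $B_v = \min\{f, N_v \lambda^{\ell(v)}\}$. If $S \cap T_v \neq \emptyset$, then $S$ already pays at least $f \geq B_v$ in facility cost on vertices of $T_v$. If $S \cap T_v = \emptyset$, then each of the $N_v$ clients in $T_v$ must connect to a facility outside $T_v$, and the unique $T$-path from any leaf of $T_v$ to any vertex outside $T_v$ traverses the edge from $v$ to its parent, whose weight is $\lambda^{\ell(v)}$ by Definition~\ref{def:HST}; hence those clients contribute at least $N_v \lambda^{\ell(v)} \geq B_v$ to the connection cost. (If $v$ happens to be the root, the second case cannot occur when $N_v \geq 1$ since any feasible solution then opens some facility, and when $N_v = 0$ we have $B_v = 0$, so nothing is needed.)

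Finally I would sum over $v \in V'$. Writing $\cost_d(S;\vec N) = |S|\,f + \sum_{w} N_w\, d(w, S)$ and using that the subtrees $T_v$ are pairwise disjoint, every open facility and every client is charged to at most one $v \in V'$, so
\begin{align*}
\opt = \cost_d(S;\vec N) \;\geq\; \sum_{v \in V'}\Big(|S \cap T_v|\,f + \sum_{w \in T_v \cap V} N_w\, d(w, S)\Big) \;\geq\; \sum_{v \in V'} B_v,
\end{align*}
where the last inequality is the per-$v$ bound established above.

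I do not anticipate a genuine obstacle: the per-$v$ estimate is an immediate two-case argument using the HST edge weights, and the only point requiring (minor) care is verifying the disjointness of the subtrees so that the charging scheme is conflict-free and the local lower bounds add up without double-counting.
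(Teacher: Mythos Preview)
Your proposal is correct and follows essentially the same approach as the paper: fix an optimum solution, use the disjointness of the subtrees $\{T_v\}_{v\in V'}$ to decompose the cost, and for each $v$ do the two-case argument (a facility in $T_v$ gives cost $\ge f$, otherwise every client in $T_v$ pays at least $\lambda^{\ell(v)}$ to exit). The paper's proof is a terser version of exactly this; your added remarks on why the subtrees are disjoint and on the degenerate root case are fine elaborations but not new ideas.
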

	
\ifdefined \FV 
\begin{proof}
		Fix the optimum solution for the instance $(V, d_T, \vec N)$.  We define the cost inside a sub-tree $T_v$ to be the total cost of open facilities in $T_v$, and the connection cost of clients in $T_v$ in the optimum solution.  Clearly, $\opt$ is at least the sum of costs inside $T_v$ over all $v \in V'$ since the $\{T_v\}_{v \in V'}$ are disjoint.  On the other hand, the cost inside $T_v$ is at least $B_v = \min\{f, N_v \lambda^{\ell(v)}\}$: If $T_v$ contains some open facility, then the cost is at least $f$; otherwise, all clients in $T_v$ has to be connected to outside $T_v$, incurring a cost of at least $N_v \lambda^{\ell(v)}$. The claim then follows.
	\end{proof}
\fi

\subsection{Base Algorithm for UFL without Privacy Guarantee}
	Before describing the $\epsilon$-DP algorithm, we first give a base algorithm (Algorithm~\ref{alg:base}) without any privacy guarantee as the starting point of our algorithmic design.  The algorithm gives an approximation ratio of $O(1/\epsilon)$; however, it is fairly simple to see that by making a small parameter change, we can achieve $O(1)$-approximation ratio.  We choose to present the algorithm with $O(1/\epsilon)$-ratio only to make it closer to our final algorithm (Algorithm~\ref{alg:final}), which is simply the noise-version of the base algorithm. The noise makes the algorithm $\epsilon$-DP, while only incurring a small loss in the approximation ratio. 
	
	Recall that we are considering the super-set output setting,where we return a set $R$ of facilities, but only open a set $S \subseteq R$ of facilities using the following \emph{closest-facility rule}: We connect every client to its nearest facility in $R$, then the set $S \subseteq R$ of open facilities is the set of facilities in $R$ with at least 1 connected client. 

\begin{algorithm}[ht]
	\caption{$\UFLtree\mhyphen\base(\epsilon)$}
	\label{alg:base}
	\begin{algorithmic}[1]
	    \State $L' \gets \max\{0, \ceil{\log_{\lambda}(\epsilon f)}\}$
		\State Let $M \gets \Big\{v \in V_T: \ell(v) \geq L'  \text{ or } N_v \cdot \lambda^{\ell(v)} \geq f \Big\}$ be the set of marked vertices
		\State $R\gets \minset(M)$
		\State \Return $R$ but only open $S \subseteq R$ using the closest-facility rule.
	\end{algorithmic}
\end{algorithm}

 	In the base algorithm, $M$ is the set of \emph{marked} vertices in $T$ and we call vertices not in $M$ \emph{unmarked}. All vertices at levels $[L', L]$ are marked.  Notice that there is a monotonicity property among vertices in $V_T$: for two vertices $u, v \in V_T$ with $u$ being an ancestor of $v$, $v$ is marked implies that $u$ is marked. Due to this property, we call an unmarked vertex $v \in V_T$ \emph{maximal-unmarked} if its parent is marked. Similarly, we call a marked vertex $v \in V_T$ \emph{minimal-marked} if all its children are unmarked (this is the case if $v$ is a leaf). So $R$ is the set of minimal-marked vertices.  Notice one difference between our algorithm and that of \cite{gupta2010differentially}: we only return minimal-marked vertices, while \cite{gupta2010differentially} returns all marked ones. This is one place where we can save a logarithmic factor, which requires more careful analysis.

	We bound the facility and connection cost of the solution $S$ given by Algorithm~\ref{alg:base} respectively. Indeed, for the facility cost, we prove some stronger statement.   Define $V^\circ = \{u \in V_t: N_u \geq 1 \}$ be the set of vertices $u$ with at least 1 client in $T_u$. We prove
	\begin{claim}
		\label{claim:S-subset}
		$S \subseteq \minset(V^\circ \cap M)$.
	\end{claim}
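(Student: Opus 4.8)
The plan is to verify, for an arbitrary $s\in S$, the three conditions that define membership in $\minset(V^\circ\cap M)$: (i) $s\in M$; (ii) $s\in V^\circ$, i.e.\ $N_s\ge 1$; and (iii) $T_s$ contains no vertex of $V^\circ\cap M$ other than $s$. Two of these come for free. Since $S\subseteq R=\minset(M)\subseteq M$, condition (i) holds, and the defining property of $\minset(M)$ — that $s$ has no \emph{marked} proper descendant — gives (iii) at once, because $V^\circ\cap M\subseteq M$. So the entire content of the claim is (ii).

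For (ii) I would first split on the level of $s$. If $\ell(s)<L'$ then, since $s\in M$ is marked below level $L'$, the marking rule forces $N_s\lambda^{\ell(s)}\ge f>0$, hence $N_s\ge 1$ (and in particular every vertex of $\minset(M)$ at a level below $L'$ already lies in $V^\circ$). The case $\ell(s)>L'$ cannot occur for $s\in\minset(M)$: such a vertex has all its children at level $\ge L'$, hence marked, contradicting minimality. So it remains to treat $\ell(s)=L'$, and here I would use the hypothesis $s\in S$: some leaf $v$ with $N_v\ge 1$ is routed to $s$, i.e.\ $s$ is a nearest vertex of $R$ to $v$ under $d_T$. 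The goal is to show that any such $v$ must lie in $T_s$ — this yields $N_s\ge N_v\ge 1$ and finishes the proof.

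To prove $v\in T_s$, suppose not, and let $a$ be the ancestor of $v$ at level $L'$. Since $v\notin T_s$, the vertex $\mathrm{lca}(v,s)$ is a proper ancestor of both $a$ and $s$ (each at level $L'$), so $d_T(v,s)$ exceeds $d_T(v,a)$ by at least two edges of weight $\lambda^{L'}$ — one on each leg of the $v$-to-$s$ path above level $L'$ — which together with the Fact bounding $d_T$ on the path from $v$ down into $T_a$ gives $d_T(v,s)\ge \frac{\lambda^{L'}-1}{\lambda-1}+2\lambda^{L'}$. On the other hand $a$ is marked (everything at level $\ge L'$ is), so $\minset(M)\cap T_a=R\cap T_a\ne\emptyset$; picking a vertex $w\in R\cap T_a$ closest to $v$, its lowest common ancestor with $v$ sits at level $\le L'$, and the same Fact gives $d_T(v,w)\le \frac{2(\lambda^{L'}-1)}{\lambda-1}$. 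Comparing, the desired contradiction $d_T(v,w)<d_T(v,s)$ reduces to the elementary inequality $2\lambda^{L'+1}>3\lambda^{L'}-1$ (equivalently $\lambda^{L'}(3-2\lambda)<1$), which one checks for the constant $\lambda$ fixed in this section — this is where the precise choice of $\lambda$ enters. Since $d_T(v,w)<d_T(v,s)$ contradicts the choice of $s$ as $v$'s nearest open facility, we conclude $v\in T_s$.

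I expect this last geometric step to be the only real obstacle: the crude HST estimates for $d_T(v,w)$ and $d_T(v,s)$ differ only by a constant factor, so the argument is sensitive to the value of $\lambda$ and to getting the edge-weight bookkeeping exactly right, and one must also invoke the monotonicity of marking along root-to-leaf paths to locate the marked ancestor $a$ and to rule out a closer ancestor of $v$ that already belongs to $R$. Everything else — the case split on $\ell(s)$, the two reductions supplied by $s\in\minset(M)$, and the fact that marking below level $L'$ forces at least one client — is just unwinding the definitions of $M$, $\minset$, $V^\circ$ and of the HST metric.
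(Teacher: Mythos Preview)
Your argument tracks the paper's exactly: conditions (i) and (iii) are immediate from $S\subseteq R=\minset(M)$, the cases $\ell(s)<L'$ and $\ell(s)>L'$ are handled just as you say, and for $\ell(s)=L'$ both you and the paper argue that a client $v$ connecting to $s$ must have its level-$L'$ ancestor equal to $s$, by exhibiting some $w\in R\cap T_a$ (with $a$ the level-$L'$ ancestor of $v$) that is closer to $v$ than $s$ is.

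The gap is in your final inequality. You reduce $d_T(v,w)<d_T(v,s)$ to $\lambda^{L'}(3-2\lambda)<1$ and assert that ``one checks'' this for the $\lambda$ fixed in the section; but the section allows an \emph{arbitrary} $\lambda\in(1,2)$, and for $\lambda<3/2$ the inequality fails once $L'$ is large enough. Concretely, take $\lambda=1.2$, $L'=3$, $f=3$ (so $\epsilon f=1.5$ with $\epsilon=1/2$): build a tree with root at level~$4$, two level-$3$ children $s$ and $a$, where $T_s$ is a bare chain down to a leaf with no clients, and $a$ has two level-$2$ children $c_1,c_2$, each a chain to a leaf, with $N_v=1$ under $c_1$ and $N_b=3$ under $c_2$. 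Then $R=\{s,b\}$, and one computes $d_T(v,s)=1+1.2+1.44+1.728+1.728=7.096$ while $d_T(v,b)=2(1+1.2+1.44)=7.28$, so $v$ connects to $s$ and $s\in S\setminus V^\circ$. The paper's own proof simply asserts $d(v,u'')<d(v,u)$ without computation and so carries the same defect; as written, the argument (and apparently the claim itself) needs $\lambda\ge 3/2$, or else a tie-breaking convention for ``nearest facility'' that prefers the client's own level-$L'$ subtree. You should state whichever additional hypothesis you are invoking rather than claim the crude HST estimate holds for all $\lambda\in(1,2)$.
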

	
	\ifdefined \FV
	\begin{proof}
		We prove that any $u \in S$ also has $u \in \minset(V^\circ \cap M)$. Notice that any $u$ with $\ell(u) > L'$ will not be in $R$ (as it will not be minimal-marked) and thus it will not be in $S$.   Any $u$ with $\ell(u) < L'$ and $N_u = 0$ will not be marked and thus will not be in $S$.  
		
		Now consider the case $\ell(u) \leq L'$ and $N_u \geq 1$. If $u \in S \subseteq R$ then $u$ is minimal-marked in $M$.  It must also be minimal-marked in $M \cap V^\circ$ since $u \in V^\circ$ and $M \cap V^\circ \subseteq M$. 
		
		Thus, it remains to focus on a vertex $u$ with $\ell(u) = L'$ and $N_u  = 0$.  Consider any leaf $v$ with $N_v > 0$ and the level-$L'$ ancestor $u'$ of $v$; so $u' \neq u$ since $N_u = 0$. Then $u' \in M$ and there will be some $u'' \in T_{u'}$ such that $u'' \in R$. We have $d(v, u'') < d(v, u)$ and thus clients in $v$ will not be connected to $u$. This holds for any $v$ with $N_v > 0$. So, $u \notin S$.  The finishes the proof of the claim.
	\end{proof}
    \fi
	
	The stronger statement we prove about the facility cost of the solution $S$ is the following:
	\begin{lemma}
		\label{lemma:base-facility}
		$|\minset(V^\circ \cap M)|\cdot f \leq (1+1/\epsilon)\opt$.
	\end{lemma}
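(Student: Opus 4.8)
The plan is to reduce everything to Claim~\ref{claim:lower-bound-opt}, applied to the set $W := \minset(V^\circ \cap M)$. First I would note that $W$ contains no ancestor--descendant pair: this is immediate from the definition of $\minset(\cdot)$, since a proper descendant of some $u \in W$ that itself lies in $V^\circ \cap M$ would contradict the minimality of $u$. Hence Claim~\ref{claim:lower-bound-opt} gives $\opt \geq \sum_{v \in W} B_v$, where $B_v = \min\{f, N_v \lambda^{\ell(v)}\}$. So it suffices to prove the per-vertex bound $B_v \geq \frac{\epsilon}{1+\epsilon}\, f$ for every $v \in W$; summing over the $|W|$ vertices then yields $|W|\cdot f \leq \frac{1+\epsilon}{\epsilon}\opt = (1+1/\epsilon)\opt$, which is the claim.

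The heart of the argument is that per-vertex bound, which I would prove by splitting on why $v$ is marked. If $N_v \lambda^{\ell(v)} \geq f$, then $B_v = f$ and we are done. Otherwise $v \in M$ only because $\ell(v) \geq L'$, and here I would first argue that in fact $\ell(v) = L'$: if $\ell(v) > L'$, every child of $v$ sits at level $\ell(v)-1 \geq L'$ and is therefore marked, and since $N_v \geq 1$ (because $v \in V^\circ$) at least one child $c$ has $N_c \geq 1$, so $c \in V^\circ \cap M$ is a proper descendant of $v$, contradicting $v \in \minset(V^\circ \cap M)$. Given $\ell(v) = L'$ and $N_v \geq 1$, we get $N_v \lambda^{\ell(v)} \geq \lambda^{L'} \geq \epsilon f$ directly from $L' = \max\{0, \ceil{\log_\lambda(\epsilon f)}\}$ (when $\epsilon f \le 1$ this is $\lambda^0 = 1 \ge \epsilon f$; otherwise it is the definition of the ceiling), so $B_v = \min\{f, N_v \lambda^{\ell(v)}\} \ge \min\{f, \epsilon f\} \ge \frac{\epsilon}{1+\epsilon} f$, the last inequality holding whether or not $\epsilon \le 1$.

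The only mildly delicate point — and the step I would be most careful about — is the equality $\ell(v) = L'$ in the second case, since it is precisely where the structure of $\minset(V^\circ \cap M)$ (as opposed to $\minset(M)$) is exploited; this is the place where returning only minimal-marked vertices, rather than all marked vertices as in \cite{gupta2010differentially}, pays off. Everything else is a short two-case check, and the constant $\frac{\epsilon}{1+\epsilon}$ is chosen exactly so that both cases clear the same threshold; no telescoping or tree summation is needed beyond the single invocation of Claim~\ref{claim:lower-bound-opt}.
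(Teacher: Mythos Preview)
Your proof is correct and follows essentially the same approach as the paper: both split on whether $N_v\lambda^{\ell(v)} \geq f$ (equivalently, whether $\ell(v) < L'$ or $\ell(v) = L'$) and in the latter case use $N_v \geq 1$ together with $\lambda^{L'} \geq \epsilon f$ to get $B_v \geq \epsilon f$. The only organizational difference is that the paper applies Claim~\ref{claim:lower-bound-opt} separately to the two cases (obtaining $\opt$ and $\frac{1}{\epsilon}\opt$), whereas you prove the uniform per-vertex bound $B_v \geq \frac{\epsilon}{1+\epsilon}f$ and invoke the claim once on all of $W$; this is a slight streamlining but the content is the same.
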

	\ifdefined \FV
		\begin{proof}
		We break the set of $u$'s in $\minset(V^\circ \cap M)$ into two cases and bound the total cost for each case separately.
		\begin{enumerate}[label=(\alph*),leftmargin=*]
			\item $\ell(u) = L'$. By the definition of $V^\circ$, we have $N_u \geq 1$. Then, $f \leq \frac{1}{{\epsilon}}\lambda^{L'} \leq \frac{1}{{\epsilon}}N_u\lambda^{\ell(u)}$ and thus $f \leq \frac{1}{{\epsilon}}B_u$ by the definitions of $L'$ and $B_u$. So, we have the total facility cost of all $u$'s in this case is at most $\frac{1}{{\epsilon}}\sum_{u\text{ in the case (a)}} B_u \leq \frac{1}{{\epsilon}}\cdot \opt$. The inequality is by Claim \ref{claim:lower-bound-opt}
			, which holds as all $u$'s in the summation are at the same level.
			\item $\ell(u) < L'$. $u$ must be minimal-marked, i.e, $u\in R$. Then we have $f \leq N_u \lambda^{\ell(u)}$ and thus $f \leq B_u$.  The total cost in this case is at most $\sum_{u \in R} B_u \leq \opt$ by Claim 1\ref{claim:lower-bound-opt}
			, which holds since $R$ does not contain an ancestor-descendant pair.
		\end{enumerate}
		Combining the two bounds gives the lemma.
	\end{proof}	
	\fi

	Notice that Claim~\ref{claim:S-subset} and Lemma~\ref{lemma:base-facility} imply that $|S|\cdot f \le O(1+1/\epsilon)\opt$.
	
	Now we switch gear to consider the connection cost of the solution $S$ and prove:
	\begin{lemma}
		\label{lemma:base-connection}
		The connection cost of $S$ given by the base algorithm is at most $O(1)\opt$.
	\end{lemma}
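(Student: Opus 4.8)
The plan is to bound the connection cost of $R$ directly, since it equals that of $S$: forming $S$ only deletes from $R$ the facilities with no client, which changes no client's nearest open facility. I would route each client not to its true nearest vertex of $R$ but to a conveniently chosen one, and then charge the resulting upper bound against $\opt$ through Claim~\ref{claim:lower-bound-opt}.

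First I would handle a single leaf $v$ with $N_v \geq 1$. If $v$ is itself marked, then since it is a leaf it has no unmarked child, so $v \in \minset(M) = R$ and its connection cost is $0$. Otherwise, walk up from $v$ toward the root; the root is at level $L \geq L'$ hence marked, so there is a well-defined lowest marked ancestor $u^\ast$ of $v$, with $u^\ast \neq v$. Let $c$ be the child of $u^\ast$ on the $v$–$u^\ast$ path; then $c$ is unmarked while its parent $u^\ast$ is marked, i.e.\ $c$ is maximal-unmarked, and $v \in T_c$. Since $u^\ast$ is marked, $T_{u^\ast}$ contains at least one vertex of $R$ (take any marked vertex of $T_{u^\ast}$ having no marked proper descendant; it lies in $\minset(M)$); call it $w$. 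Applying the Fact to $u^\ast$ and its descendants $v$ and $w$,
\begin{align*}
 d_T(v,R)\le d_T(v,w)\le d_T(v,u^\ast)+d_T(u^\ast,w)\le \frac{2\lambda^{\ell(u^\ast)}}{\lambda-1}=\frac{2\lambda}{\lambda-1}\,\lambda^{\ell(c)},
\end{align*}
using $\ell(u^\ast)=\ell(c)+1$ (and the bound trivially holds when $w=u^\ast$).

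Next I would sum over all clients and group the unmarked leaves by their (unique) maximal-unmarked ancestor $c$; since $\sum_{v\in T_c\cap V}N_v=N_c$, this gives a total connection cost of at most $\frac{2\lambda}{\lambda-1}\sum_{c}N_c\lambda^{\ell(c)}$, the sum ranging over maximal-unmarked $c$ with $N_c\ge 1$. These $c$'s are pairwise incomparable in the ancestor relation: if one were a proper ancestor of another, the parent of the lower one would be marked yet a descendant of an unmarked vertex, contradicting the monotonicity of marking. Moreover each such $c$ is unmarked, so $N_c\lambda^{\ell(c)}<f$ and hence $B_c=N_c\lambda^{\ell(c)}$. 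Claim~\ref{claim:lower-bound-opt}, applied to the set of maximal-unmarked vertices, then yields $\sum_c N_c\lambda^{\ell(c)}=\sum_c B_c\le\opt$, so the connection cost is at most $\frac{2\lambda}{\lambda-1}\opt=O(1)\opt$ because $\lambda\in(1,2)$ is an absolute constant. I do not expect a genuine obstacle here; the only points needing care are that a vertex of $R$ always exists inside $T_{u^\ast}$ (so the detour-through-$u^\ast$ routing is legal) and that the maximal-unmarked vertices are pairwise incomparable (so Claim~\ref{claim:lower-bound-opt} applies) — both are consequences of the monotonicity of the marking. If one insists the returned facilities be leaves of $T$, the earlier factor-$2$ observation for pushing facilities from $V_T\setminus V$ down into $V$ only affects the constant.
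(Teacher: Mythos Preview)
Your proof is correct and follows essentially the same route as the paper: bound the connection cost of $R$ (equal to that of $S$), route each client at an unmarked leaf through its lowest marked ancestor to some vertex of $R$ in that subtree, group the resulting cost by the maximal-unmarked vertex on the path, and invoke Claim~\ref{claim:lower-bound-opt} after observing that these vertices form an antichain with $B_c=N_c\lambda^{\ell(c)}$. Your write-up is in fact a bit more careful than the paper's in justifying the antichain property and the existence of a vertex of $R$ inside $T_{u^\ast}$.
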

	
	\ifdefined \FV
		\begin{proof}
		Notice that reducing $R$ to $S$ will not increase the connection cost of any client since we are using the closest-facility rule. Thus, we can pretend the set of open facilities is $R$ instead of $S$. Let us focus the clients located at any leaf $v \in V$. Let $u$ be the level-$L'$ ancestor of $v$; notice that $u$ is marked.  If $v$ is marked, then $v \in R$ and thus the connection cost of any client at $v$ is $0$. So we can assume $v$ is unmarked. Then there is exactly one maximal-unmarked vertex $u' \neq u$ in the $u$-$v$ path in $T$.  Let $u''$ be the parent of $u'$. Then $u''$ is marked and some vertex in $T_{u''}$ will be in $R$. So the connection cost of a client at $v$ is at most $\frac{2}{\lambda-1} \lambda^{\ell(u'')} = \frac{2\lambda}{\lambda-1}\cdot \lambda^{\ell(u')}$. The total connection cost of clients is at most
			\begin{align*}
				\frac{2\lambda}{\lambda-1}\sum_{u' \in V_T \text{ is maximal-unmarked}} N_{u'} \cdot \lambda^{\ell(u')} = \frac{2\lambda}{\lambda-1}\sum_{u' \text{ as before}}B_{u'} \leq \frac{2\lambda}{\lambda-1}\cdot \opt.
			\end{align*}
			For every $u'$ in the summation, we have $N_{u'}\lambda^{\ell(u')} < f$ and thus $B_{u'} = N_{u'}\lambda^{\ell(u')}$. So the equality in the above sequence holds. The inequality follows from Claim \ref{claim:lower-bound-opt}
			, as the set of maximal-unmarked vertices does not contain an ancestor-descendant pair.
		Thus, we proved the lemma.
	\end{proof}
	\fi

\subsection{Guaranteeing $\epsilon$-DP by Adding Noises}
In this section, we describe the final algorithm (Algorithm~\ref{alg:final}) that achieves $\epsilon$-DP without sacrificing the order of the approximation ratio.  Recall that $\eta = \sqrt{\lambda}$.

	\begin{algorithm}[ht]
			\caption{$\DPUFLtree(\epsilon)$}
			\label{alg:final}
			\begin{algorithmic}[1]
			    \State $L' \gets \max\{0, \ceil{\log_{\lambda}(\epsilon f)}\}$
				\State \text{for} every $v \in V_T$ with $\ell(v) < L'$, define
						$\displaystyle \tilde N_v :=  N_v + {\Lap\left(\frac{f}{ c\eta^{L'+\ell(v)}}\right)}$, where $c = \frac{\eta-1}{\eta^2}$.
				\State Let $M \gets \Big\{v \in V_T: \ell(v) \geq L'  \text{ or } \tilde N_v \cdot \lambda^{\ell(v)} \geq f \Big\}$ be the set of marked vertices
				\State $R\gets \minset(M)$
				\State \Return $R$ but only open $S \subseteq R$ using the closest-facility rule.
			\end{algorithmic}
		\end{algorithm}
We give some intuitions on how we choose the noises in Step 1 of the Algorithm. Let us travel through the tree from level $L'$ down to level $0$. Then the Laplacian parameter, which corresponds to the magnitude of the Laplacian noise, goes up by factors of $\eta$.  This scaling factor is carefully chosen to guarantee two properties. First the noise should go up exponentially so that the DP parameter only depends on the noise on the highest level, i.e, level $L'$.  Second, $\eta$ is smaller than the distance scaling factor $\lambda = \eta^2$. Though the noises are getting bigger as we travel down the tree, their effects are getting smaller since they do not grow fast enough.  Then essentially, the effect of the noises is only on levels near $L'$.

	\begin{lemma}
		\label{lemma:epsilon-DP}
		Algorithm~\ref{alg:final} satisfies $\epsilon$-DP property. 
	\end{lemma}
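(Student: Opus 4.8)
The plan is to observe that Algorithm~\ref{alg:final} is a post-processing of a single noisy-count release, and then to control that release directly. Concretely, I would first note that, conditioned on the vector $\tilde{\vec N} := (\tilde N_v)_{v \in V_T,\ \ell(v) < L'}$, the marked set $M$ --- and therefore $R = \minset(M)$, which is the object actually returned --- is a \emph{deterministic} function of $\tilde{\vec N}$ together with data-independent quantities ($f$, $L'$, $\lambda$, the tree $T$); the vertices of level $\geq L'$ are marked unconditionally and play no role. (The set $S$ and the closest-facility rule are part of the output model's semantics, not of the released object, so they need not be tracked.) By the post-processing property of differential privacy it therefore suffices to prove that the map $\vec N \mapsto \tilde{\vec N}$ is $\epsilon$-DP.

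Next I would fix neighbors $\vec N, \vec N'$ with $|\vec N - \vec N'|_1 = 1$ and, by symmetry, write $\vec N' = \vec N + e_{v^*}$ for a leaf $v^*$. Since $N_u = \sum_{w \in T_u \cap V} N_w$, the counts differ by exactly $1$ at the ancestors of $v^*$ (including $v^*$) and agree everywhere else. Among the coordinates of $\tilde{\vec N}$ (the vertices of level $< L'$), the only ones whose added-noise distribution is shifted are thus the ancestors of $v^*$ at levels $0,1,\dots,L'-1$: exactly one per level, say $v_\ell$ at level $\ell$, carrying independent noise $\Lap(b_\ell)$ with $b_\ell = \frac{f}{c\,\eta^{L'+\ell}}$.

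The core step is then a product-of-density-ratios bound. Writing $p_{\vec N}$ for the density of $\tilde{\vec N}$ under input $\vec N$, independence of the Laplace noises gives, for every $x$,
\[
\frac{p_{\vec N'}(x)}{p_{\vec N}(x)} \;=\; \prod_{\ell=0}^{L'-1} \frac{g_{b_\ell}\!\left(x_{v_\ell}-N_{v_\ell}-1\right)}{g_{b_\ell}\!\left(x_{v_\ell}-N_{v_\ell}\right)} \;\leq\; \prod_{\ell=0}^{L'-1} e^{1/b_\ell},
\]
where $g_b$ is the $\Lap(b)$ density and the last bound is the usual $1$-Lipschitz estimate on $\log g_b$ at sensitivity $1$. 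Hence the privacy loss is at most $\sum_{\ell=0}^{L'-1} b_\ell^{-1} = \frac{c}{f}\sum_{\ell=0}^{L'-1}\eta^{L'+\ell} = \frac{c\,\eta^{L'}}{f}\cdot\frac{\eta^{L'}-1}{\eta-1}$, and plugging in $c = \frac{\eta-1}{\eta^2}$ and $\eta^2 = \lambda$ collapses this to something strictly less than $\frac{\lambda^{L'-1}}{f}$. Finally, either $L' = 0$ and the sum is empty (nothing to prove, the output is deterministic), or $L'$ is the least integer with $\lambda^{L'} \geq \epsilon f$, so that $\lambda^{L'-1} < \epsilon f$ and the loss is $< \epsilon$; this yields $\epsilon$-DP.

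I expect the only genuinely delicate point to be the reduction at the start: making precise that the returned object $R$ is a deterministic post-processing of $\tilde{\vec N}$ alone, so that neither the unconditionally-marked high levels nor the closest-facility bookkeeping leak anything extra. Once that is cleanly stated, the rest is the tuned geometric-series estimate --- which is exactly where the noise design pays off: the scale grows by a factor $\eta$ per level so the privacy loss is a convergent geometric series dominated by its top level, and the choices $\eta^2 = \lambda$ and $c = (\eta-1)/\eta^2$ pin the total down to $\lambda^{L'-1}/f$, after which the definition of $L'$ forces it below $\epsilon$.
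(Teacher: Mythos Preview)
Your proposal is correct and follows essentially the same route as the paper: isolate the ancestors of the changed leaf at levels $0,\dots,L'-1$, use the Laplace tail/density bound to get a per-level privacy cost of $c\eta^{L'+\ell}/f$, sum the geometric series, and close with $\lambda^{L'-1}<\epsilon f$ (the paper writes the equivalent $\lambda^{L'}\le\lambda\epsilon f$). The only cosmetic difference is that you package the argument as ``post-processing of the noisy vector $\tilde{\vec N}$'' while the paper phrases it as ``composition of the independent marking decisions $u\in M$''; the underlying computation is identical.
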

\ifdefined \FV
\begin{proof}
		We focus on two datasets $\vec N, \vec N' \in \Z_{\geq 0}^V$ satisfying $|\vec N' - \vec N|_1 = 1$. Let $v$ be the unique leaf with $|N_v - N'_v| = 1$. We show the $\epsilon$-differential privacy between the two distributions of $M$ generated by Algorithm 2
		for the two datasets. Since the returned set $R$ is completely decided by $M$, this is sufficient to establish the $\epsilon$-DP property. For two different vertices $u, u' \in V_T$,  the event $u \in M$ is independent of the event $u' \in M$, since the former only depends on the noise added to $N_u$ and the later only depends that added to $N_{u'}$.  Thus, by the composition of differentially private mechanisms, we can focus on the sub-algorithm that only outputs whether $u \in M$, for each $u \in V_T$. 
		
		For simplicity, for every $u \in V_T$, let $a_u$ and $a'_u$ respectively indicate if $u \in M$ under the datasets $\vec N$  and $\vec N'$.  If $\ell(u) \geq L'$ then $a_u=a'_u=1$ always holds. Also, $a_u$ and $a'_u$ have the same distribution if $u$ is not an ancestor of $v$, since $\vec N_u = \vec N'_u$.  That is the sub-algorithm for $u$ is $0$-DP between $\vec N$ and $\vec N'$.  
		 
		So we can focus on an ancestor $u$ of $v$ with $\ell(u) < L'$. For this $u$ we have $|N_u - N'_u| = 1$. Due to the property of the Laplacian distribution, the sub-algorithm for this $u$ is $\frac{c\eta^{L'+\ell(v)}}{f}$-differentially private between $\vec N$ and $\vec N'$. Summing the privacy over all such vertices gives the total privacy as  
		\begin{align*}
			c\sum_{\ell = 0}^{L'-1} \frac{\eta^{L'+\ell(v)}}{f} = \frac{c\eta^{L'}}{f} \cdot \frac{\eta^{L'}-1}{\eta-1} \leq
			\frac{c}{f(\eta - 1)}\cdot \lambda^{L'} \leq \frac{c}{f(\eta - 1)}\cdot \lambda\epsilon f = \frac{c\eta^2 \epsilon}{\eta - 1} = \epsilon,
		\end{align*}
		by the definition of $c$. 		
	\end{proof}
\fi 

	\subsection{Increase of cost due to the noises}

	We shall analyze how the noise affects the facility and connection costs. Let $M^0, R^0$ and $S^0$ (resp. $M^1, R^1$ and $S^1$) be the $M, R$ and $S$ generated by Algorithm~\ref{alg:base} (resp. Algorithm~\ref{alg:final}). 
	%
	 In the proof, we shall also consider running Algorithm~\ref{alg:base} with input vector being $2\vec N$ instead of $\vec N$. Let  $M'^0, R'^0$ and $S'^0$ be the $M, R$ and $S$ generated by Algorithm~\ref{alg:base} when the input vector is $2\vec N$.  Notice that the optimum solution for input vector $2\vec N$ is at most $2\opt$.  Thus, Lemma~\ref{lemma:base-facility} implies $|S'^0| \cdot f = O(1/\epsilon)\opt$.
	 Notice that $M^0, R^0, S^0, M'^0, R'^0$ and $S'^0$ are deterministic while $M^1, R^1$ and $S^1$ are randomized.   
	 
	 The lemmas we shall prove are the following:
	 \begin{lemma}
	 	\label{lemma:increase-of-facility}
	 	$\E\left[|S^1| \cdot f\right] \leq O(1/\epsilon) \cdot \opt$.
	 \end{lemma}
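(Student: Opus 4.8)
\emph{Proof plan.} I would bound $\E[|S^1|]$ by the expected number of marked vertices, exactly as the facility cost of Algorithm~\ref{alg:base} was bounded via Claim~\ref{claim:S-subset} and Lemma~\ref{lemma:base-facility}, and then pay separately for the vertices that become marked only because of the added noise. First, arguing as in the proof of Claim~\ref{claim:S-subset}, one obtains the analogous inclusion $S^1\subseteq\minset(V^\circ\cap M^1)$: vertices of level $>L'$ are never minimal-marked, and a vertex $u$ with $N_u=0$ never has a connecting client, the only new case being such a $u$ with $\ell(u)<L'$ that gets marked because of a large Laplace value. Moreover every $u\in\minset(V^\circ\cap M^1)$ has $\ell(u)\le L'$ (if $\ell(u)>L'$, the child of $u$ towards any client lies in $V^\circ\cap M^1$, contradicting minimality), and $\minset(V^\circ\cap M^1)$ contains no ancestor--descendant pair. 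It therefore suffices to bound $\E\!\big[f\cdot|\minset(V^\circ\cap M^1)|\big]$, and I would split $\minset(V^\circ\cap M^1)$ into
\begin{enumerate}[label=(\roman*)]
\item $\{u:\ell(u)=L'\}$;\qquad (ii) $\{u:\ell(u)<L',\ N_u\lambda^{\ell(u)}\ge f/2\}$;\qquad (iii) $\{u:\ell(u)<L',\ N_u\lambda^{\ell(u)}<f/2\}$.
\end{enumerate}

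Groups (i) and (ii) are handled deterministically through Claim~\ref{claim:lower-bound-opt}. In (i), $N_u\ge 1$ gives $N_u\lambda^{\ell(u)}\ge\lambda^{L'}\ge\epsilon f$, so $B_u\ge\min\{f,\epsilon f\}$ and hence $f\le(1+1/\epsilon)B_u$; summing over these pairwise-incomparable vertices, Claim~\ref{claim:lower-bound-opt} bounds their total cost by $(1+1/\epsilon)\opt$. In (ii), $B_u\ge f/2$, i.e.\ $f\le 2B_u$, and Claim~\ref{claim:lower-bound-opt} bounds the total cost by $2\opt$. (Alternatively one can route (i) and (ii) through the deterministic run of Algorithm~\ref{alg:base} on the input $2\vec N$, whose opened set has cost $O(1/\epsilon)\opt$ by Lemma~\ref{lemma:base-facility}.)

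The work is in group (iii). Fix a level $\ell<L'$. A vertex $u$ with $\ell(u)=\ell$, $N_u\ge 1$ and $N_u\lambda^\ell<f/2$ lies in $M^1$ only if its noise $\Lap\!\big(\tfrac{f}{c\eta^{L'+\ell}}\big)$ exceeds $\tfrac{f}{\lambda^\ell}-N_u$, and since $N_u<\tfrac{f}{2\lambda^\ell}$ this quantity is at least $\tfrac{f}{2\lambda^\ell}$; by the Laplace tail bound, and using $\lambda=\eta^2$ to write $\tfrac{c\,\eta^{L'+\ell}}{2\lambda^\ell}=\tfrac{c}{2}\eta^{\,L'-\ell}$, this event has probability at most $\tfrac12\exp\!\big(-\tfrac{c}{2}\eta^{\,L'-\ell}\big)$. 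I would charge each such $u$ its cost $f$ through the bound $f\le\tfrac{f}{\lambda^\ell}B_u$, valid because here $B_u=N_u\lambda^\ell\ge\lambda^\ell$; summing over the (pairwise incomparable) level-$\ell$ vertices and applying Claim~\ref{claim:lower-bound-opt} shows that the expected cost of the level-$\ell$ part of group (iii) is at most $\tfrac{f}{2\lambda^\ell}\exp\!\big(-\tfrac{c}{2}\eta^{\,L'-\ell}\big)\opt$, which, since $\lambda^{L'}\ge\epsilon f$, is at most $\tfrac{1}{2\epsilon}\,\eta^{\,2(L'-\ell)}\exp\!\big(-\tfrac{c}{2}\eta^{\,L'-\ell}\big)\opt$. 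Summing over $\ell=0,\dots,L'-1$, i.e.\ over $j:=L'-\ell\ge 1$, the total expected cost of group (iii) is at most $\tfrac{\opt}{2\epsilon}\sum_{j\ge 1}\eta^{2j}e^{-c\eta^{j}/2}=O(1/\epsilon)\opt$, the series converging to a constant depending only on $\lambda$. Adding the three contributions yields $\E[|S^1|]\cdot f\le O(1/\epsilon)\opt$.

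The main obstacle is group (iii): a plain union bound over all client-carrying vertices at a fixed level fails, since their number need not be $O(\opt/f)$, so each over-marked vertex at level $\ell$ must instead be paid for through the per-level lower bound $B_u\ge\lambda^\ell$ of Claim~\ref{claim:lower-bound-opt}, which costs a factor $f/\lambda^\ell\le\eta^{2(L'-\ell)}/\epsilon$. The estimate closes only because the Laplace tail supplies a doubly-exponential factor $\exp(-\tfrac{c}{2}\eta^{\,L'-\ell})$ that dominates $\eta^{2(L'-\ell)}$, making the sum over levels converge to a constant; this is precisely why the noise magnitudes are chosen to scale like $\eta^{-(L'+\ell)}$ with $\eta=\sqrt{\lambda}<\lambda$.
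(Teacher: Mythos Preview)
Your argument has a genuine gap at the very first step: the inclusion $S^1\subseteq\minset(V^\circ\cap M^1)$ is \emph{false} for the noisy algorithm. The assertion ``a vertex $u$ with $N_u=0$ never has a connecting client'' is not correct. In Algorithm~\ref{alg:base} a vertex with $N_u=0$ and $\ell(u)<L'$ is simply never marked, which is why Claim~\ref{claim:S-subset} goes through; but in Algorithm~\ref{alg:final} such a vertex can be marked by a large Laplace value, enter $R^1=\minset(M^1)$, and then attract a client located outside $T_u$. Concretely, take $L'=2$, a root $r$ with children $a,b$ at level~$1$, leaves $v\in T_a$, $w\in T_b$, and $N_v=1$, $N_w=0$. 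If the noise makes $b$ marked while $a$, $v$, $w$ stay unmarked, then $M^1=\{r,b\}$, $R^1=\{b\}$, the client at $v$ must connect to $b$, and $b\in S^1$ although $b\notin V^\circ$. Thus $S^1\setminus V^\circ$ can be nonempty and must be bounded separately.

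The paper treats exactly this issue. It first shows $S^1\cap V^\circ\subseteq R^1\cap V^\circ\subseteq\minset(M^1\cap V^\circ)$ and bounds $\E\big[f\cdot|\minset(M^1\cap V^\circ)|\big]$ essentially as you do (your groups (i)--(iii) correspond to the comparison with the deterministic run on $2\vec N$ plus the Laplace-tail sum). For $S^1\setminus V^\circ$, the paper uses an additional injection: if $u\in S^1$ has $N_u=0$ and $\ell(u)<L'$, some client $v$ outside $T_u$ connects to $u$; letting $\pi(u)$ be the level-$\ell(u)$ ancestor of $v$, one gets $\pi(u)\in V^\circ$, $\ell(\pi(u))=\ell(u)$, and $\pi$ is injective (since $u$ is also the closest point of $R^1$ to $\pi(u)$). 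This lets one charge $f$ through $B_{\pi(u)}\ge\lambda^{\ell(u)}$ and reuse the same level-by-level Laplace-tail estimate. Your write-up needs this (or an equivalent) argument to close the gap; once it is added, the remainder of your analysis matches the paper's.
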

	 
	 \begin{lemma}
	 	\label{lemma:increase-of-connection}
	 	The expected connection cost of the solution $S^1$ is $O(1)$ times that of $S^0$. 
	 \end{lemma}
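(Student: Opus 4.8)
The plan is to bound the \emph{expected} connection cost of $R^1$, which upper bounds that of $S^1$ since passing from $R^1$ to $S^1$ via the closest-facility rule never increases any client's connection cost. The key reusable estimate is the one behind Lemma~\ref{lemma:base-connection}: for \emph{any} marking set $M$ with $R=\minset(M)$, charge each client at a leaf $v$ to the lowest maximal-unmarked ancestor $u$ of $v$ in $M$; since $R$ has a vertex inside the (marked) parent-subtree of $u$, that client pays at most $\frac{2\lambda}{\lambda-1}\lambda^{\ell(u)}$. Grouping by $u$ and using $N_u=\sum_{w\in T_u\cap V}N_w$ gives the uniform inequality
\[
(\text{connection cost of }R)\ \le\ \frac{2\lambda}{\lambda-1}\sum_{u\,:\,u\text{ maximal-unmarked in }M} N_u\lambda^{\ell(u)} .
\]
Applying this with $M=M^1$, taking expectations, and using $\Pr[u\text{ maximal-unmarked in }M^1]\le\Pr[u\notin M^1]$, it suffices to bound $\sum_{u\in V_T}\Pr[u\notin M^1]\,N_u\lambda^{\ell(u)}$.

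Next I would compute the single-vertex tail. For $\ell(u)<L'$ we have $u\notin M^1\iff \tilde N_u\lambda^{\ell(u)}<f\iff \Lap\!\big(\tfrac{f}{c\eta^{L'+\ell(u)}}\big)<\tfrac{f}{\lambda^{\ell(u)}}-N_u$. Writing $\beta_u:=N_u\lambda^{\ell(u)}/f$, when $u$ is \emph{heavy} (i.e. $\beta_u\ge1$, equivalently $u\in M^0$) the right side equals $-(\beta_u-1)\tfrac{f}{\lambda^{\ell(u)}}\le0$, so the Laplace tail together with $\eta^2=\lambda$ gives
\[
\Pr[u\notin M^1]\ =\ \tfrac12\exp\!\big(-(\beta_u-1)\,c\,\eta^{\,L'-\ell(u)}\big).
\]
The decay rate $c\,\eta^{L'-\ell(u)}$ is the crucial quantity: it is at least $c\eta>0$ and grows geometrically as $u$ moves up the tree; for $\ell(u)\ge L'$, $u\in M^1$ deterministically and contributes nothing.

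I would then split the sum into three groups. \textbf{(i) Light vertices} ($\beta_u<1$, i.e.\ $u\notin M^0$): bound $\Pr[u\notin M^1]\le1$. Lightness is downward closed, and the maximal light vertices at levels $<L'$ are exactly the maximal-unmarked vertices of $M^0$; summing $N_u\lambda^{\ell(u)}$ over the light subtree below such a $u^\star$ is a geometric series telescoping to $\frac{\lambda}{\lambda-1}B_{u^\star}$. Conversely, all $N_{u^\star}$ clients of $T_{u^\star}$ are connected outside $T_{u^\star}$ by $S^0$ (no marked vertex lies inside $T_{u^\star}$), each at distance $\ge\lambda^{\ell(u^\star)}$, so the connection cost of $S^0$ is at least $\sum_{u^\star}B_{u^\star}$. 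Hence group (i) contributes $O(1)$ times the connection cost of $S^0$. \textbf{(ii) Mildly-heavy vertices} ($1\le\beta_u<\lambda$): the structural observation is that such a $u$ has \emph{no heavy proper descendant} — a heavy child of $u$ would force $N_u\lambda^{\ell(u)}\ge\lambda f$ — so these vertices are minimal-heavy and form an antichain \emph{across all levels}; by Claim~\ref{claim:lower-bound-opt} there are at most $\opt/f$ of them. Using $\Pr[u\notin M^1]\le1$ and $N_u\lambda^{\ell(u)}<\lambda f$, group (ii) contributes $O(\opt)$. \textbf{(iii) Strongly-heavy vertices} ($\beta_u\ge\lambda$): use the tail estimate. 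On each level $\ell<L'$ these form an antichain with every $B_u=f$, so at most $\opt/f$ of them, each contributing at most $f\cdot\sup_{\beta\ge\lambda}\beta e^{-(\beta-1)c\eta^{L'-\ell}}$; a one-variable calculation bounds this supremum by $O(\eta^{-(L'-\ell)})$ when $c\eta^{L'-\ell}$ is small and by $O(e^{-(\lambda-1)c\eta^{L'-\ell}})$ otherwise, and in both cases summing over $\ell=0,\dots,L'-1$ converges to an absolute constant because $\eta^{L'-\ell}$ grows geometrically in $L'-\ell$. Thus group (iii) also contributes $O(\opt)$.

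Combining the three groups yields $\E[\text{connection cost of }S^1]\le O(1)\cdot(\text{connection cost of }S^0)+O(1)\cdot\opt$, which is $O(\opt)$ by Lemma~\ref{lemma:base-connection}; since $\opt\le\cost_{d_T}(S^0)$, the expected connection cost of $S^1$ is $O(1)$ times that of $S^0$, which is what is needed (together with Lemma~\ref{lemma:increase-of-facility}) for Theorem~\ref{thm:UFL-tree}. I expect the heavy vertices to be the main obstacle: the crux is the observation that every mildly-heavy vertex is minimal-heavy (turning a union bound over $\Omega(L')$ levels into a single antichain of size $O(\opt/f)$), and the double geometric summation controlling group (iii). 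I note that a bound purely in terms of the connection cost of $S^0$ is not available — a barely-heavy vertex, marked and serviced at cost $0$ by $S^0$, may be left unmarked by $M^1$ with constant probability, and its resulting connection cost must be charged against $\opt$ (equivalently, against the facility cost the base solution must pay); groups (ii)–(iii) are exactly this slack.
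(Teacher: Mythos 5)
Your proposal is correct in substance, and it takes a genuinely different --- and more careful --- route than the paper. The paper's own proof is a short per-client geometric argument: for a client whose nearest facility in $R^0$ is $v_0$, the ancestors $v_0, v_1, \dots$ of $v_0$ are all in $M^0$, so each independently drops out of $M^1$ with probability at most $1/2$; if exactly the first $i$ of them drop out, the connection cost scales by roughly $\lambda^i$, and $\sum_i (\lambda/2)^i = O(1)$ because $\lambda < 2$. Your argument instead works globally, bounding $\sum_u \Pr[u \notin M^1]\,N_u\lambda^{\ell(u)}$ via the light / mildly-heavy / strongly-heavy split, counting antichains against $\opt$ through Claim~\ref{claim:lower-bound-opt}, and using a per-level Laplace tail estimate. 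The extra work buys you something real: your closing note on barely-heavy vertices identifies precisely the case the paper's per-client argument glosses over. A client sitting at a marked leaf pays connection cost $0$ under $S^0$, yet with constant probability pays at least $1$ per client under $S^1$ (e.g.\ one leaf $w$ with $N_w = f$: then $\Pr[w\notin M^1]=1/2$), so a purely multiplicative bound against the connection cost of $S^0$ is not literally available; the overflow has to be charged to $\opt$, which your groups (ii) and (iii) do correctly. Your one misstep is the last inference: from $\opt \le \cost_{d_T}(S^0)$ you cannot recover the multiplicative form of the lemma, since $\cost_{d_T}(S^0)$ also contains the facility cost, not only the connection cost. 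But since all that Theorem~\ref{thm:UFL-tree} actually uses is that the expected connection cost of $S^1$ is $O(\opt)$, which you do establish, this is a presentational slip rather than a substantive gap.
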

	 
	 Thus, combining the two lemmas, we have that the expected cost of the solution $S^1$ is at most $O(1/\epsilon)\opt$, finishing the proof of Theorem~\ref{thm:UFL-tree}. Indeed, we only lose an $O(1)$-factor for the connection cost as both factors in Lemma~\ref{lemma:base-facility} and \ref{lemma:increase-of-connection} are $O(1)$. We then prove the two lemmas separately. 
%
	\subsubsection{Increase of facility costs due to the noise} 
  	In this section, we prove Lemma~\ref{lemma:increase-of-facility}. A technical lemma we can prove is the following:
  	\begin{claim}
  		\label{claim:adding-one}
  		Let $M \subseteq V_T$  and $M' = M \cup \{v\}$ for some $v \in V_T \setminus M$, then exactly one of following three cases happens.
  		\begin{enumerate}[itemsep=0pt, label=(\ref{claim:adding-one}\alph*)]
	  		\item $\minset(M') = \minset(M)$.
	  		\item $\minset(M') = \minset(M) \uplus \{v\}$.
	  		\item $\minset(M') = \minset(M) \setminus \{u\} \cup \{v\}$, where $u \in \minset(M), v \notin \minset(M)$ and $v$ is a descendant of $u$.
  		\end{enumerate}
  	\end{claim}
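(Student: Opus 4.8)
The statement is a structural fact about how $\minset$ changes when a single vertex $v$ is added to a set $M$. The plan is to case-split on the relationship between $v$ and the vertices of $\minset(M)$ in the tree $T$, and to argue mutual exclusivity separately at the end. First I would recall the definition: $w \in \minset(M)$ iff $w \in M$ and no proper descendant of $w$ lies in $M$. Adding $v$ can only affect membership in $\minset$ for vertices that are ancestors of $v$ (whose ``minimality'' might be destroyed by the new descendant $v$) or for $v$ itself. All other vertices $w$ have $T_w \setminus \{w\}$ and the condition $w \in M$ untouched, so $w \in \minset(M') \iff w \in \minset(M)$; this observation localizes the whole argument to the root-to-$v$ path.

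\textbf{Key steps.} The main dichotomy is whether $T_v \setminus \{v\}$ contains a vertex of $M$. If it does, pick any $w \in \minset(M)$ that is a descendant of $v$ (one exists by taking a minimal element of $M \cap (T_v \setminus \{v\})$ — here I'd use the monotonicity/finiteness of the tree). Then $v \notin \minset(M')$ since $v$ has the $M$-descendant $w$, and no ancestor of $v$ changes status (they already had $w$ as an $M$-descendant, both before and after), and $w$ and all other elements of $\minset(M)$ survive; so $\minset(M') = \minset(M)$, which is case (a). If instead $T_v \setminus \{v\} \cap M = \emptyset$, then $v$ becomes a new minimal marked vertex, so $v \in \minset(M')$. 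Now sub-split on whether $v$ has an ancestor in $M$. If $v$ has no ancestor in $M$, then no existing element of $\minset(M)$ is an ancestor of $v$ (an ancestor of $v$ in $\minset(M)$ would be an ancestor of $v$ in $M$), and conversely $v$ is not a descendant of any $M$-vertex, so nothing is removed: $\minset(M') = \minset(M) \uplus \{v\}$, case (b). If $v$ does have an ancestor in $M$, let $u$ be the \emph{minimal} such ancestor; I claim $u \in \minset(M)$: any proper descendant of $u$ in $M$ would either be an ancestor of $v$ (contradicting minimality of $u$) or not comparable to $v$ — but wait, that last case must be ruled out, which is exactly the delicate point (see below). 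Granting $u \in \minset(M)$: in $M'$, $u$ now has the descendant $v \in M'$, so $u \notin \minset(M')$; and $u$ is the unique element of $\minset(M)$ removed, because any other $\minset(M)$ vertex is incomparable to $v$ hence unaffected. Also $v \notin \minset(M)$ here since $v \notin M$. This yields $\minset(M') = \minset(M) \setminus \{u\} \cup \{v\}$ with $v$ a descendant of $u$, case (c).

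\textbf{Main obstacle.} The subtle step is showing that when $v$ has a marked ancestor, its minimal marked ancestor $u$ actually lies in $\minset(M)$ — i.e., that $u$ has no proper descendant in $M$ at all, not merely no descendant-ancestor-of-$v$ in $M$. Here is where the hypothesis $T_v \setminus \{v\} \cap M = \emptyset$ from the outer case is essential: if $w$ were a proper descendant of $u$ with $w \in M$, then $w$ is comparable or incomparable to $v$; it cannot be a proper descendant of $v$ (that set is $M$-free), it cannot equal $v$ ($v \notin M$), and if it were an ancestor of $v$ (or $v$ itself) it would contradict minimality of $u$; if it is incomparable to $v$ but a descendant of $u$ — this is the case that needs care, and it would genuinely break the claim, so I'd need to re-examine whether (c) should allow $\minset(M)$ to lose such extra vertices too. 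Re-reading the claim, case (c) asserts \emph{exactly one} vertex $u$ is removed, so the resolution must be that no such incomparable-to-$v$, descendant-of-$u$, $M$-vertex can be in $\minset(M)$ while $u$ is the minimal $M$-ancestor of $v$: indeed if such $w \in \minset(M)$ existed, $u$ would have a proper $M$-descendant namely $w$, so $u \notin \minset(M)$ — but that does not prevent $w$ from being removed. The cleanest fix is to \emph{not} claim $u \in \minset(M)$ directly but instead characterize $\minset(M') \setminus \minset(M)$ and $\minset(M) \setminus \minset(M')$ as subsets of the root-to-$v$ path, show the former is exactly $\{v\}$, and show the latter is the single highest $\minset(M)$-vertex on that path (if any), using that $\minset(M)$-vertices are pairwise incomparable so at most one lies on any root-to-leaf path. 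I expect that reformulation — ``at most one $\minset(M)$ vertex on the $v$-root path'' — to be the load-bearing observation that makes all three cases fall out cleanly, and proving mutual exclusivity then just amounts to noting the three right-hand sides have different cardinalities or different membership of $v$.
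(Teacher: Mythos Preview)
Your fix at the end is correct and would carry the proof through, but the route is more circuitous than necessary, and the detour you took is instructive to compare with what the paper does.

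Your initial sub-split in the second case --- on whether $v$ has an ancestor in $M$ --- is the wrong dichotomy, as you yourself discovered: the minimal $M$-ancestor $u$ of $v$ need not lie in $\minset(M)$, since $u$ may have $M$-descendants in a sibling subtree of $v$. In that scenario nothing is removed from $\minset(M)$ and we are actually in case~(b), not~(c). Your patched version, which instead asks whether $v$ has an ancestor in $\minset(M)$ (equivalently, whether the unique root-to-$v$ path meets $\minset(M)$), is the right dichotomy and makes the argument go through cleanly.

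The paper arrives at exactly this dichotomy in one line via the identity
\[
\minset(M\cup\{v\}) \;=\; \minset\bigl(\minset(M)\cup\{v\}\bigr),
\]
which reduces the whole problem to inserting $v$ into the \emph{antichain} $\minset(M)$. Once you know you are adding $v$ to an antichain $A$, the three cases are immediate: either $v$ is above some element of $A$ (case~(a), $v$ is absorbed), or $v$ is below the unique element $u\in A$ on its root path (case~(c), $u$ is replaced by $v$), or $v$ is incomparable to all of $A$ (case~(b)). Your final ``load-bearing observation'' that at most one $\minset(M)$-vertex lies on any root-to-leaf path is exactly the antichain property, so after your fix the two arguments converge; the paper's identity just gets there without the false start. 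The identity is worth remembering: it says that for the purpose of computing $\minset$, you may freely replace any set by its set of minimal elements before inserting new vertices.
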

  	\ifdefined \FV
  	\begin{proof}[Proof of Claim ~\ref{claim:adding-one}]
  		Notice that $\minset(M')  = \minset(\minset(M) \cup \{v\})$, and $\minset(M)$ does not contain an ancestor-descendant pair. If $v$ is an ancestor of some vertex in $\minset(M)$, Case (3
  		a) happens.   If $v$ is a descendant of some vertex in $\minset(M)$, then Case (3
  		c) happens.  Otherwise, we have case (3
  		b).
  	\end{proof}
  	\fi
  	\ifdefined\CR\vspace*{-10pt}\fi
  	
  	\begin{proof}[Proof of Lemma~\ref{lemma:increase-of-facility}]
	  	 Recall that $V^\circ$ is the set of vertices $u$ with $N_u \geq 1$.  We first focus on open facilities in $V^\circ$ in $S^1$.
	  	 Claim~\ref{claim:adding-one} implies that adding one new element to $M$ will increase $|\minset(M)|$ by at most 1. Thus, we have
		\begin{align*}
			&\quad |\minset(M^1 \cap V^\circ)| - |\minset(M'^0 \cap V^\circ)| \leq  |(M^1 \cap V^\circ) \setminus (M'^0 \cap V^\circ)| \\
			&= \left|\left\{u \in V^\circ: \ell(u) < L', 2N_u <  \frac{f}{\lambda^{\ell(u)}} \leq \tilde N_u\right\}\right|.
		\end{align*}
		We now bound the expectation of the above quantity. Let $U^*$ be the set of vertices $u \in V^\circ$ with $\ell(u) < L'$ and $N_u < \frac{f}{2 \lambda^{\ell(u)}} $. Then for every $u \in U^*$, we have
		\begin{align}
			&\quad \Pr[u \in M^1] = \Pr\left[N_v + \Lap\left(\frac{f}{c\eta^{L' +\ell(u)}}\right) \geq \frac{ f}{\lambda^{\ell(u)}}\right] \nonumber\\
			&\leq \frac12 \exp\left(-\frac{f/(2\lambda^{\ell(u)})}{f/\big(c\eta^{L' + \ell(u)}\big)}\right)=\frac12\exp\left(-{\frac{c\eta^{L'-\ell(u)}}{2}}\right). \label{equ:bound-f-interest}
		\end{align}
		We bound $f$ times the sum of \eqref{equ:bound-f-interest}, over all $u \in U^*$. 
		Notice that every $u \in V^\circ$ has $N_u \geq 1$. So we have $B_u = \min\left\{f, N_u\lambda^{\ell(u)}\right\} \geq \lambda^{\ell(u)}$ for every $u$ we are interested. Then, 
			\begin{align}
				f  \leq \frac{1}{\epsilon}\cdot{\epsilon f}\cdot\frac{B_u}{\lambda^{\ell(u)}} \leq \frac{1}{\epsilon}\cdot \lambda^{L'} \cdot\frac{B_u}{\lambda^{\ell(u)}}  =  \frac{B_u}{\epsilon}\cdot \eta^{2(L' - \ell(u))}. \label{equ:bound-f}
			\end{align}
			The last inequality comes from $\epsilon f \leq \lambda^{L'}$. The equality used that $\lambda = \eta^2$.
			
			We group the $u$'s according to $\ell(u)$. For each level $\ell \in [0, L'-1]$, we have 
			\begin{align*}
				\frac{f}2\sum_{u \in U^*: \ell(u) = \ell} \exp\left(-{\frac{c\eta^{L'-\ell(u)}}{2}}\right) \leq \frac{1}{2\epsilon}  \eta^{2(L' - \ell)}\exp\left(-{\frac{c\eta^{L'-\ell}}{2}}\right) \sum_{u\text{ as before}} B_u \leq \frac{c_\ell}{2\epsilon} \opt,
			\end{align*}
			where we defined $x_\ell=\eta^{L'-\ell(u)}$ and $c_\ell = x_\ell^2 \exp(-\frac{cx_\ell}{2})$.  The last inequality used Claim~\ref{claim:lower-bound-opt}, which holds since all $u$'s in the summation are at the same level.
			
			Taking the sum over all $\ell$ from $0$ to $L'$, we obtain
			\begin{align*}
				f \sum_{u \in U^*} \Pr[u \in M^1] \leq \frac{\opt}{2\epsilon} \cdot \sum_{\ell = 0}^{L'-1} c_\ell = 
				\frac{\opt}{2\epsilon} \cdot \sum_{\ell = 0}^{L'-1} x_\ell^2 \exp(-\frac{cx_\ell}{2}).
			\end{align*}
			 Notice that $\{x_\ell: \ell \in [0, L'-1]\}$ is exactly $\{\eta, \eta^2, \cdots, \eta^{L'}\}$. It is easy to see summation is bounded by a constant for any constant $c$. Thus, the above quantity is at most $O(1/\epsilon) \opt$. Therefore, we proved $$f\cdot \E\left[|\minset(M^1 \cap V^\circ)| - |\minset(M'^0 \cap V^\circ)|\right] \leq O(1/\epsilon) \cdot \opt.$$
			 
			 Notice that Lemma~\ref{lemma:base-facility} says that $f\cdot |\minset(M'^0 \cap V^\circ)| \leq O(1/\epsilon) \opt$.  Thus $f\cdot \E[|\minset(M^1 \cap V^\circ)|] \leq O(1/\epsilon)\opt$.  
			 
			 Then we take vertices outside $V^\circ$ into consideration. Let $U = \minset(M^1 \cap V^\circ)$. Then $S^1 \subseteq R^1 = \minset(U \cup (V_T \setminus V^\circ))$. To bound the facility cost of $S^1$, we start with the set $U' = U$ and add vertices in $V_T \setminus V^\circ$ (these are vertices $u$ with $N_u = 0$) to $U'$ one by one and see how this changes $\minset(U')$. By Claim~\ref{claim:adding-one},  adding a vertex $N_v = 0$ to $U'$ will either not change  $\minset(U')$, or add $v$ to $\minset(U')$, or replace an ancestor of $v$ with $v$.  In all the cases, the set $\minset(U') \cap V^\circ$ can only shrink.  Thus, we have $R^1 \cap V^\circ \subseteq \minset(U) = \minset(M^1 \cap V^\circ)$. We have $E[|R^1 \cap V|\cdot f] \leq O(1/\epsilon)\cdot \opt$.
		
			Thus, it suffices to bound the expectation of $|S\setminus V^\circ|\cdot f$. Focus on some $u \in V_T$ with $N_u = 0$. Notice that $u \notin S$ if $\ell(u) \geq L'$. So, we assume $\ell(u) < L'$. In this case there is some leaf $v \in V$ with $N_v > 0$ such that $u$ is the closest point in $R$ to $v$. So $v$ is not a descendant of $u$. Let $u'$ be the ancestor of $v$ that is at the same level at $u$ and define $\pi(u) = u'$. Then $\ell(\pi(u)) = \ell(u)$. Moreover, $u$ is also the closest point in $R$ to $u'$, implying that $\pi$ is an injection.  For every $u$, we can bound $f$ as in \eqref{equ:bound-f}, but with $B_u$ replaced by $B_{\pi(u)}$. Then the above analysis still works since we have $\sum_{u: N_u = 0, \ell(u) =\ell} B_{\pi(u)} \leq \opt$ for every $\ell \in [0, L'-1]$ by Claim~\ref{claim:lower-bound-opt}.
  	\end{proof}

\subsubsection{Increase of connection cost due to the noise}
	Now we switch gear to consider the change of connection cost due to the noise. \ifdefined\CR\vspace*{-10pt}\fi
	\begin{proof}[Proof of Lemma~\ref{lemma:increase-of-connection}]
		Focus on a vertex $v$ at level $\ell$ and suppose $v \in S^0$ and some clients are connected to $v$ in the solution produced by Algorithm~\ref{alg:final}.  So, we have $N_v \geq \frac{ f}{\lambda^{\ell}}$. Let the ancestor of $v$ (including $v$ itself) be $v_0 = v, v_1, v_2, \cdots$ from the bottom to the top.  Then the probability that $v_0 \notin M^0$ is at most $1/2$ and in that case the connection cost increases by a factor of $\lambda$. The probability that $v_0, v_1 \notin M^0$ is at most $1/4$, and in that case the cost increases by a factor of $\lambda^2$ and so on. As a result, the expected scaling factor for the connection cost due to the noise is at most 
		\begin{align*}
		&\sum_{i=0}^{\infty} \frac1{2^i} \cdot \lambda^{i} = \sum_{i=1}^{\infty} \left(\frac{\lambda}{2}\right)^i =O(1).
		\end{align*}
		Thus, the connection cost of the solution $S^1$ is at most a constant times that of $S^0$. This is the place where we require $\lambda < 2$.
	\end{proof}

\section{ Lower Bound of UFL for HST Metric}
\label{sec:lowerbound}


In this section, we prove an $\Omega(1/\sqrt{\epsilon})$ lower bound on the approximation ratio of any algorithm for UFL in the super-set setting under the $\epsilon$-DP model.    The metric we are using is the uniform star-metric: the shortest-path metric of a star where all edges have the same length.  We call the number of edges in the star its size and the length of these edges its \emph{radius}.  By splitting edges, we can easily see that the metric is a $\lambda$-HST metric for a $\lambda > 1$, if the radius is $\frac{\lambda^L}{\lambda - 1}$ for some integer $L$.


The main theorem we are going to prove is the following:
\begin{theorem}\label{thm:5.1}
	There is a constant $c > 0$ such that the following holds. For any small enough $\epsilon< 1, f > 0$ and sufficiently large integer $n$ that depends on $\epsilon$, there exists a set of UFL instances $\{(V, d, f, \vec N)\}_{\vec N}$, where $(V, d)$ is the uniform-star metric of size $n$ and radius $\sqrt{\epsilon} f$, and every instance in the set has $n \leq |\vec N|_1 \leq n/\epsilon$, such that the following holds: no $\epsilon$-DP algorithm under the super-set setting can achieve $c\frac{1}{\sqrt{\epsilon}}$-approximation for all the instances in the set. 
%
\end{theorem}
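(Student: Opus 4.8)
The plan is a differential-privacy instance-packing argument; the one real design decision is to pick the number of leaves and the client multiplicities so that the privacy ``budget'' of $O(1/\epsilon)$ is matched to the scale of $\opt$. Fix $K:=\lceil 1/\epsilon\rceil$ and $n:=\lceil 1/\sqrt\epsilon\rceil$, let $(V,d)$ be the uniform star with center $o$ and leaves $v_1,\dots,v_n$ with every edge of length $\sqrt\epsilon f$ (so it is a $\lambda$-HST metric for a suitable $\lambda$), and let all instances below use facility cost $f$. Take a base instance $I_0$ with $N_{v_i}=1$ for all $i$, and for each $j\in[n]$ an instance $I_j$ that agrees with $I_0$ except that $N_{v_j}=K$. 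A routine check gives $n\le|\vec N|_1\le n/\epsilon$ for every instance; and, using $n\sqrt\epsilon\le 2$, the solution ``open $o$ (and $v_j$), serve everyone at their nearest open facility'' shows $\opt_{I_0}\le f+n\sqrt\epsilon f\le 3f$ and $\opt_{I_j}\le 2f+(n-1)\sqrt\epsilon f\le 4f$, while trivially $\opt\ge f$; so all optima are $\Theta(f)$. The witnessing family is $\{I_0,I_1,\dots,I_n\}$.

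Suppose for contradiction that some $\epsilon$-DP algorithm $\calA$ in the super-set setting is a $(c/\sqrt\epsilon)$-approximation on every instance of this family, $c>0$ a small absolute constant to be fixed below. The first step is to show $\calA$ must return $v_j$ on $I_j$ with high probability: if the returned set omits $v_j$, then the $K$ clients at $v_j$ are served at distance at least $\sqrt\epsilon f$ (the center, or another leaf), so the realized cost is at least $K\sqrt\epsilon f\ge f/\sqrt\epsilon$; hence $\E[\cost_d(\calA(I_j);I_j)]\ge \frac{f}{\sqrt\epsilon}\Pr[v_j\notin\calA(I_j)]$, and comparing with $\E[\cost_d(\calA(I_j);I_j)]\le\frac{c}{\sqrt\epsilon}\opt_{I_j}\le\frac{4cf}{\sqrt\epsilon}$ forces $\Pr[v_j\in\calA(I_j)]\ge 1-4c$.

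The second step transfers this to $I_0$ by privacy. Since $I_j$ and $I_0$ differ only in the count at $v_j$, we have $|\vec N^{I_j}-\vec N^{I_0}|_1=K-1\le 1/\epsilon$, so applying $\epsilon$-DP along a chain of neighboring datasets (group privacy) to the event $\{v_j\in\calA(\cdot)\}$ yields $\Pr[v_j\in\calA(I_0)]\ge e^{-\epsilon(K-1)}\Pr[v_j\in\calA(I_j)]\ge e^{-1}(1-4c)$ for every $j\in[n]$. Now the key observation about $I_0$: every leaf carries a client there, and whenever $v_j$ is in the returned set that client connects to $v_j$ itself, so $v_j$ is actually opened; hence $\cost_d(\calA(I_0);I_0)\ge f\cdot|\calA(I_0)\cap\{v_1,\dots,v_n\}|$ pointwise, and summing, $\E[\cost_d(\calA(I_0);I_0)]\ge f\sum_{j=1}^n\Pr[v_j\in\calA(I_0)]\ge e^{-1}(1-4c)\,nf\ge\frac{e^{-1}(1-4c)}{\sqrt\epsilon}f$. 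But the approximation guarantee on $I_0$ bounds the same quantity by $\frac{c}{\sqrt\epsilon}\opt_{I_0}\le\frac{3cf}{\sqrt\epsilon}$, so $e^{-1}(1-4c)\le 3c$ — impossible once $c<\frac{1}{3e+4}$. Any such constant $c$ then proves the theorem, for every $f>0$, all sufficiently small $\epsilon$, and the stated $n=\lceil 1/\sqrt\epsilon\rceil$.

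The part needing care is the balance of parameters, not the argument itself: the mishandling penalty $K\sqrt\epsilon f$ must exceed $\opt=\Theta(f)$ by a $\Theta(1/\sqrt\epsilon)$ factor, which forces $K=\Theta(1/\epsilon)$ and hence a privacy distance $K-1=\Theta(1/\epsilon)$ between $I_0$ and $I_j$ — precisely the largest distance over which $\epsilon$-DP still gives an $O(1)$ probability ratio; and to make $\calA(I_0)$ violate its own guarantee we must accumulate $n=\Theta(1/\sqrt\epsilon)$ forced openings while keeping $\opt_{I_0}=\Theta(f)$, which is exactly why $n\sqrt\epsilon=\Theta(1)$ is forced. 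Finally, the constraint $|\vec N|_1\ge n$ is what rules out the slicker packing against the all-empty instance and dictates the ``one client at every leaf'' base instance $I_0$, whose leaves then double as the coordinates of the packing. The remaining items — the $\lambda$-HST reformulation of the star, the group-privacy composition, and the $|\vec N|_1$ bounds — are all routine.
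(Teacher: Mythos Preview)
Your argument is correct, and it takes a genuinely different route from the paper's proof.

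The paper first analyzes a two-point instance $\{a,b\}$ with a \emph{free} facility at $b$, showing that any $\epsilon$-DP algorithm incurs an $\Omega(\sqrt{m})$ loss on one of the two inputs $N\in\{1,m\}$; it then replicates this into a star and removes the ``free facility'' by an averaging (Yao) argument over the product distribution on $\{1,m\}^n$, proving the expected cost of any $\epsilon$-DP algorithm is $\Omega(\sqrt{m})$ times the expected optimum and extracting a single bad $\vec N$. Your proof bypasses both the two-point warm-up and the distributional step: you fix a single base instance $I_0$ with one client per leaf, together with $n$ one-coordinate perturbations $I_j$, and combine group privacy over a distance of $K-1\le 1/\epsilon$ with the observation that on $I_0$ every returned leaf is automatically \emph{opened}. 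This yields a clean finite packing lower bound with no averaging.

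What each buys: your argument is shorter and more elementary, and makes transparent the role of the constraint $|\vec N|_1\ge n$ (it forces the base instance to carry a client on every leaf, which is precisely what converts ``$v_j$ returned'' into ``$v_j$ opened''). The paper's route, on the other hand, isolates the conceptual core in the two-point instance and, because its averaging argument does not pin down $n$, delivers the statement for every $n\ge 1/\sqrt{\epsilon}$ rather than just $n=\lceil 1/\sqrt{\epsilon}\rceil$; if the theorem is read as ``for all sufficiently large $n$'' rather than ``for some $n$ depending on $\epsilon$'', your proof as written would need a small extension (re-balancing the bounds on $\opt_{I_0},\opt_{I_j}$ when $n\sqrt{\epsilon}$ is large), though this is routine.
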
\ifdefined\CR\vspace*{-15pt}\fi
\begin{proof}
Throughout the proof, we let $m = 1/\epsilon$ and we assume $m$ is an integer.  We prove Theorem \ref{thm:5.1} in two steps, first we show the lower bound on an instance with a $2$-point metric, but non-uniform facility costs.   Then we make the facility costs uniform by combining multiple copies of the $2$-point metric into a star metric.

Consider the instance shown in Figure \ref{fig:0} where $V=\{a, b\}$ and $d(a, b) =\sqrt{\epsilon}f$. The facility costs for $a$ and $b$ are respectively $f$ and $0$.   Thus, we can assume the facility $b$ is always open. All the clients are at $a$, and the number $N$ of clients is promised to be an integer between $1$ and $m$.   We show that for this instance, no $\epsilon$-DP algorithm in the super-set output setting can distinguish between the case where $N=1$ and that $N=m$ with constant probability; this will establish the $\Omega(\sqrt{m})$ lower bound. 

\begin{figure}[!htbp]
	\centering
	\begin{subfigure}[b]{.1\textwidth}
		\includegraphics[width=\textwidth]{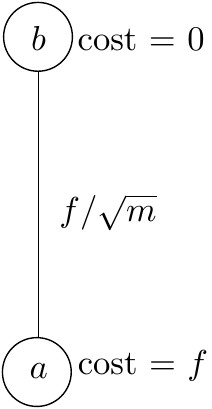}
		\caption{\label{fig:0}}
	\end{subfigure} \hspace{0.05\textwidth}
	~
    \begin{subfigure}[b]{.35\textwidth}
	    \includegraphics[width=\textwidth]{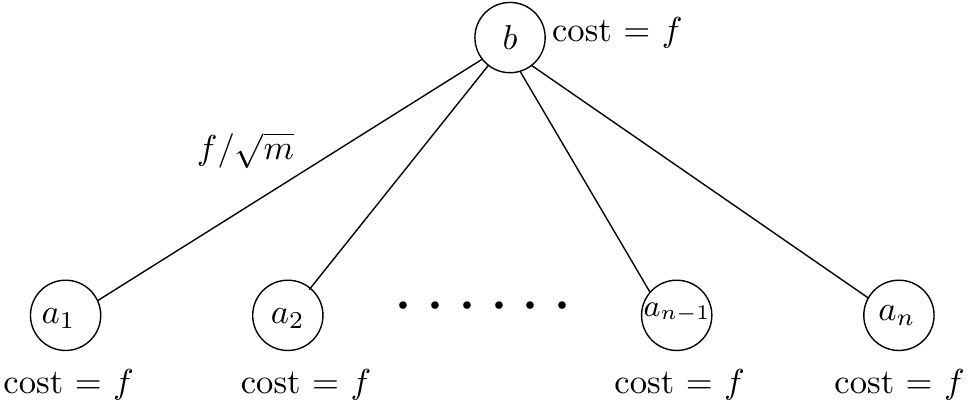}
	   	 \caption{\label{fig:1}}
    \end{subfigure} \hspace{0.05\textwidth}
    ~
    \begin{subfigure}[b]{.35\textwidth}
	    \includegraphics[width=\textwidth]{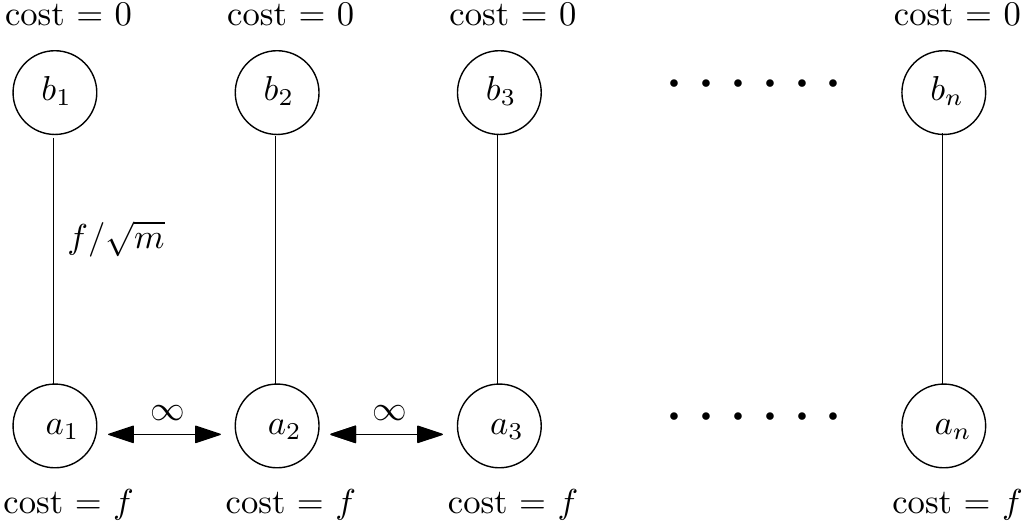}
	    \caption{\label{fig:2}}
    \end{subfigure}
    \caption{Instance for the lower bound.}
\end{figure}

Obviously, there are only 2 solutions for any instance in the setting: either we open $a$, or we do not.  Since we are promised there is at least 1 client, the central curator has to reveal whether we open $a$ or not, even in the super-set output setting: If we do not open $a$, then we should not include $a$ in the returned set $R$ since otherwise the client will think it is connected to $a$;  if we open $a$, then we need to include it in the returned set $R$ since all the clients need to be connected to $a$. 

Let $D_i$ be the scenario where we have $N = i$ clients at $a$, where $i \in [m]$.  
Then the cost of the two solutions for the two scenarios $D_1$ and $D_{m}$ are listed in the following table:

\vspace*{-10pt}
\begin{table}[H]
	\centering \renewcommand\arraystretch{1.1}
	\begin{tabular}{c|c|c}
		&  not open $a$ & open $a$ \\ \hline
		$D_1$ & $f/\sqrt{m}$ & $f$ \\\hline
		$D_{m}$ & $\sqrt{m}f$ & $f$
	\end{tabular}
\end{table}
\vspace*{-10pt}

Thus, if the data set is $D_1$, we should not open $a$; if we opened, we'll lose a factor of $\sqrt{m}$.  If the data set is $D_{m}$, then we should open $a$; if we did not open, then we also lose a factor of $\sqrt{m}$.  

%

Now consider any $\epsilon$-DP algorithm $\calA$.  Assume towards the contradiction that $\calA$ achieves $0.2\sqrt{m}$ approximation ratio. Then, under the data set $D_1$, $\calA$ should choose not to open $a$ with probability at least $0.8$.  By the $\epsilon$-DP property, under the data set $D_{m}$, $\calA$ shall choose not to open $a$ with probability at least $0.8 e^{-(m-1)\epsilon} > 0.8/e \geq 0.2$. Then under the data set $D_{m}$, the 
approximation ratio of $\calA$ is more than $0.2\sqrt{m}$, leading to a contradiction. 

Indeed, later we need an average version of the lower bound as follows: 
\begin{align}
	\frac{\sqrt{m}}{\sqrt{m}+1} \E\cost(\calA(1); 1) + \frac{1}{\sqrt{m}+1} \E\cost(\calA(m); m) \geq cf, \label{inequ:average}
\end{align}
where $c$ is an absolute constant, $\calA(N)$ is the solution output by the algorithm $\calA$ when there are $N$ clients at $a$, and $\cost(\calA(N); N)$ is the cost of the solution under the input $N$.  Our argument above showed that either  $\E\cost(\calA(1);1) \geq \Omega(0.2\sqrt{m}) \cdot f/\sqrt{m} = 0.2f$, or $\E\cost(\calA(N); N) \geq 0.2\sqrt{m} \cdot f = 0.2\sqrt{m}f$.  In either case, the left side of \eqref{inequ:average} is at least  $\frac{0.2\sqrt{m} f}{\sqrt{m} + 1} \geq cf$ if $c$ is small.

%


The above proof almost proved Theorem~\ref{thm:5.1} except that we need to place a free open facility at location $b$. To make the facility costs uniform, we can make multiple copies of the locations $a$, while only keeping one location $b$; this is exactly a star metric (see Figure~\ref{fig:1}). The costs for all facilities are $f$. However, since there are so many copies of $a$, the cost of $f$ for opening a facility at $b$ is so small and thus can be ignored.  Then, the instance essentially becomes many separate copies of the 2-point instances we described (see Figure~\ref{fig:2}).

However, proving that the ``parallel repetition'' instance in Figure~\ref{fig:2} has the same lower bound as the original two-point instance is not so straightforward.  Intuitively, we can imagine that the central curator should treat all copies independently: the input data for one copy should not affect the decisions we make for the other copies. However, it is tricky to prove this.  Instead, we prove Theorem~\ref{thm:5.1} directly by defining a distribution over all possible instances and argue that an $\epsilon$-DP algorithm must be bad on average. 

\ifdefined \FV

Now,  we formally describe the star instance we shall construct for proving Theorem ~\ref{thm:5.1}
  .  The star is depicted in Figure 1b 
  : the set of nodes are $V=\{b, a_1, a_2, \cdots, a_{n}\}$ where $b$ is the center and $a_i$'s are the leaves. The distance between $b$ and $a_i$ for $i\in [n]$ is $f/\sqrt{m}$. The facility cost of each node equals to $f$. 
  
  We shall make sure that in every location $a_i$, we have at least 1 client. Then, the optimum solution to the instance has cost at least $n\cdot f/\sqrt{m}$. If our $n$ is at least, say $\sqrt{m} = 1/\sqrt{\epsilon}$, then the cost is at least $f$. So, opening the facility at $b$ will only cost $f$ and thus only lose an additive factor of $1$ in the approximation ratio. This does not affect our analysis since we are interested in whether there is a $c\sqrt{m}$-approximation or not. Thus, without loss of generality we can assume the facility at $b$ is open for free.  With this assumption, we can see that the instance  in Figure 1b 
  is equivalent to instance in Figure 1c 
  : it can be seem as $n$ copies of the instance in Figure 1a 
  , that is  for each $i\in [n]$, the facility cost at $b_i$ is 0, while it is $f$ at $a_i$. The distance between $b_i$ and $a_i$ is $f/\sqrt{m}$. Moreover, the distances between the copies are $\infty$.  In the following we will focus on instance in Figure 1c 
  . 
  
  \newcommand{\calP}{{\mathcal{P}}}
  
  We can use a vector $\vec N \in \Z_{\geq 0}^n$ to denote an input vector, where $N_i$ is the number of clients at location $a_i$.  We shall only be interested in the datasets $\vec N \in \{1, m\}^n$, i.e, every location has either 1 or $m$ clients.  We define a distribution $\calP$ over the set of vectors as follows: for every $i \in [n]$, let $N_i = 1$ with probability $\frac{\sqrt{m}}{\sqrt{m} + 1}$ and $N_i = m$ with probability $\frac1{\sqrt{m}+1}$; the choices for all $i \in [n]$ are made independently. 
  %
  
  We consider the term of $\mathbb{E}_{{\vec N}\sim \mathcal{P}}[\cost(S^*({\vec N}); {\vec N})]$, where $S^*({\vec N})$ is the optimal solution for the dataset ${\vec N}$. 
  Then, for the optimum solution is easy to define: if $N_i = 1$, we do not open $a_i$ and if $N_i = m$, we open $a_i$. The expected cost of the optimum solution over all vector $\vec N$ is
  \begin{align}
  &\mathbb{E}_{{\vec N}\sim \mathcal{P}}[\cost(S^*({\vec N}); {\vec N})]= \sum_{{\vec N}\in \{1, m\}^n} P_{\vec N'}[\cost(S^*({\vec N}); {\vec N})]=\sum_{{\vec N}\in \{1, m\}^n} P_{\vec N} \sum_{i=1}^n \text{opt}(N_i) \nonumber \\
  &= \sum_{i=1}^n \sum_{{\vec N}, {\vec N'}\in \{1, m\}^n, {\vec N}, {\vec N'} \text{ differ at   } i, N_i=1, N'_i=m }[P_{\vec N} \cdot \text{opt}(1)+P_{\vec N'} \cdot\text{opt}(m)]  \nonumber \\
  &=\sum_{i, \vec N, \vec{N'}\text{ as before}}\frac{2P_{\vec N} f}{\sqrt{m}}. \label{eq:15}
  \end{align}
  Above $\opt(1) = f/\sqrt{m}$ and $\opt(m) = f$ are the optimum cost for the two-point instance when there are $1$ and $m$ clients respectively. $P_{\vec N} $ is the probability of $\vec N$ according to $\calP$. The second equality in (\ref{eq:15}) is due to the structure of instance of Figure 1c 
  and thus $\cost(S^*({\vec N}); {\vec N})=\sum_{i=1}^n \text{opt}(N_i)$. Notice that we could have a cleaner form for the expected cost; however we use the above form for the purpose of easy comparison.\medskip
  
  Now we consider any $\epsilon$-DP algorithm $\calA$. Due to the structure of Figure 1c
  , if we denote the term $\cost_i( \calA(N_i); N_i)$ as the cost of $\calA({\vec N})$ given ${\vec N}$ incurred by clients at $a_i$,  the expected cost of $\calA$ can be written as 
  \begin{align*}
  & \mathbb{E}_{{\vec N}\sim \mathcal{P}, \calA}[\cost(\calA({\vec N}); {\vec N} )]\\
  & =
  \sum_{i=1}^n \sum_{{\vec N}, {\vec N'} \in \{1, m\}^n, {\vec N}, {\vec N'} \text{ differ at } i, N_i=1, N'_i=m } [P_{\vec N}\mathbb{E} \cost_i( \calA(\vec N); \vec N)+ P_{\vec N'} \mathbb{E}\cost_i( \calA(\vec{N'}); \vec N' )]. 
  \end{align*}
  From the analysis of the two-point metric, we already proved that for any such pair $({\vec N}, {\vec N'})$ in the summation, we have 
  \begin{equation}
  P_{\vec N}\mathbb{E} \cost_i (\vec N; \calA(\vec N))+ P_{\vec N'} \mathbb{E}\cost_i(\vec N'; \calA(\vec N'))\geq (P_{\vec N} + P_{\vec N'})cf \geq P_{\vec N}cf. \label{inequ:average-1}
  \end{equation}
  Indeed, if the above property does not hold, we can just use our algorithm $\calA$ as a black box to solve the two-point instance: For every $j \neq i$, we just pretend we have $N_j = N'_j$ clients at $a_j$; the number of clients at $a_i$ is exactly the number of clients in $a$ in the two point solution; Run the algorithm $\calA$ on the instance. Then the negation of \eqref{inequ:average-1} will imply the negation of (4)
  . Thus \eqref{inequ:average-1} must hold.
  
  Thus we have  
  \begin{align}
  &\mathbb{E}_{{\vec N}\sim \mathcal{P}}[\cost(\calA({\vec N}); {\vec N})] \geq \sum_{i, \vec N, \vec N'}cP_{\vec N}f.
  \end{align}
  Comparing the inequality with \eqref{eq:15} gives that 
  \begin{align*}
  \mathbb{E}_{{\vec N}\sim \mathcal{P}}[\cost(\calA({\vec N}); {\vec N})] \geq (c\sqrt{m}/2)\mathbb{E}_{{\vec N}\sim \mathcal{P}}[\cost(S^*({\vec N}); {\vec N})].
  \end{align*}
  Thus, $\E[\cost(\calA(\vec N); \vec N)] \geq (c\sqrt{m}/2) \cost(S^*(\vec N); \vec N)$ holds for at least one vector $\vec N$ in the support of $\calP$. So the algorithm $\calA$ must have an $\Omega(\sqrt{m})$-approximation factor.  This finishes the proof of Theorem ~\ref{thm:5.1}.
\end{proof}

\else Due to the page limit, the detailed analysis is left to the full vesion of the paper.

\fi


    \bibliographystyle{plain}
    \bibliography{DPUFLMain}
    \appendix
    \section{Connection of our setting to the Joint Differential Privacy model}
\label{appendix:connection-JDP}
The \textit{Joint Differential Privacy} model was initially proposed in [Kears-Pai-Roth-Ullman, ITCS 2014]. In the model, every client gets its own output from the central curator and the algorithm is $\epsilon$-joint differentially private (JDP) if for every two datasets $D$, $D'$ with $D' = D \uplus \{j\}$, the joint distribution of the outputs for all clients \emph{except} $j$ under the data $D$ is not much different from that under the dataset $D'$ (using a definition similar to that of the $\epsilon$-Differential Privacy). In other words, $j$'s own output should not be considered when we talk about the privacy for $j$. 

The super-set output setting in the $\epsilon$-DP model we considered for UFL is closely related to the JDP-model. In order for the JDP model to be meaningful, one needs to define the problem in such a way that the algorithm needs to give an output for each client, which contains important information for the client. In the UFL problem, this information can be which facility the client should be connected to. If we define UFL in this way, i.e, every client only needs to know its own connecting facility, then our $\epsilon$-DP algorithm in the super-set output setting implies an $\epsilon$-JDP algorithm: Instead of outputting the superset $R$ to everyone, the central curator can simply output to each $j$ the facility that $j$ is connecting to. 

\section{Reducing general metrics to HST metrics: Proof of Theorem ~\ref{thm:reduction}}
\label{appendix:reducing-to-HST}
\begin{proof}

	This section is dedicated to the  proof of Theorem 1
	.
	The FRT tree decomposition result states that any $n$-point metric can be embedded into a distribution of $O(1)$-HST metrics:
		\begin{theorem}
			\label{thm:FRT}
			There is an efficient randomized algorithm that, given a constant $\lambda > 1$, a metric $(V,d)$ with $n:=|V|$, minimum non-zero distance at least $1$ and diameter $\Delta:=\max_{u, v \in V}d(u,v)$, outputs a $\lambda$-HST $T$ whose leaves are exactly $V$ satisfying the following conditions:
			\begin{enumerate}[label=(\ref{thm:FRT}\alph*)]
				\item With probability 1, for every $u, v \in V$, we have $d_T(u, v) \geq d(u, v)$. 
				\item For every $u, v \in V$, we have 
				\begin{equation*}
					\E d_T(u,v) \leq O(\log n) \cdot d(u, v),
				\end{equation*}
				where the expectation is over all randomness of the algorithm.  
			\end{enumerate}
		\end{theorem}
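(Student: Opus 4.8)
The plan is to establish Theorem~\ref{thm:FRT} by the standard random hierarchical partition of Fakcharoenphol, Rao and Talwar, adapted from base $2$ to an arbitrary constant base $\lambda > 1$. First I would normalize: since the minimum non-zero distance is at least $1$, set $\delta := \lceil \log_\lambda \Delta\rceil$, so all pairwise distances lie in $[1, \lambda^\delta]$, and build a $\lambda$-HST $T$ of depth $L := \delta+1$ by specifying, for each level $\ell \in \{0,1,\dots,L\}$, a partition $P_\ell$ of $V$ into clusters with $P_L = \{V\}$ and every cluster of $P_\ell$ contained in a cluster of $P_{\ell+1}$. The level-$\ell$ vertices of $T$ are the nonempty clusters of $P_\ell$, each a child of the cluster of $P_{\ell+1}$ containing it; by level $0$ every cluster is a singleton, and if a cluster becomes a singleton already at some level $\ell_0>0$ we extend it downward as a path of singleton clusters, which changes no distance. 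Assigning weight $\lambda^\ell$ to every edge between levels $\ell$ and $\ell+1$ makes $T$ a $\lambda$-HST per Definition~\ref{def:HST}, with $V$ as its leaves.

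The construction uses two independent randomness sources: a uniformly random permutation $\pi$ of $V$, and a single radius parameter $\beta$ uniform on the multiplicative window $[\lambda^{-1},1)$, which sets the cutting radius at level $\ell$ to $r_\ell := \beta\lambda^{\ell-1} < \lambda^{\ell-1}$, so every level-$\ell$ cluster is a ball of radius $< \lambda^{\ell-1}$ and hence has diameter $< 2\lambda^{\ell-1}$. To refine $P_{\ell+1}$ into $P_\ell$, scan the vertices $w\in V$ in the order $\pi$, and when $w$ is processed form, within each $C\in P_{\ell+1}$, the level-$\ell$ sub-cluster of all not-yet-claimed vertices of $C$ within distance $r_\ell$ of $w$. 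This yields a laminar family, hence a legitimate tree.

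Property (a), non-contraction, is the easy direction: if $u\neq v$ have least common ancestor at level $\ell$, they lie in a common level-$\ell$ cluster, so $d(u,v) < 2\lambda^{\ell-1}$, whereas $d_T(u,v)\ge 2\lambda^{\ell-1}$ by the Fact (each of $u,v$ is at distance $\ge\lambda^{\ell-1}$ from the LCA); hence $d_T(u,v)\ge d(u,v)$ always. Property (b), the expected-stretch bound, is the crux and the step I expect to be the main obstacle. Fix $u,v$ and write $D := d(u,v)$; if the LCA is at level $\ell$ then $d_T(u,v)\le \tfrac{2\lambda^{\ell}}{\lambda-1} = O(\lambda^\ell)$, so it suffices to bound $\sum_\ell \lambda^\ell\Pr[\text{LCA at level }\ell]$. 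Call $w$ the \emph{cutter at level $\ell$} if, among all vertices that claim $u$ or $v$ in the level-$\ell$ refinement, $w$ is first in $\pi$-order and claims exactly one of $u,v$; the LCA is at level $\ell$ exactly when some $w$ cuts there. For $w$ to cut at level $\ell$, $r_\ell$ must lie in the half-open interval between $d(w,u)$ and $d(w,v)$, of length $\le D$, so over $\beta$ alone $\Pr[r_\ell \text{ in that interval}]\le \frac{\lambda}{\lambda-1}\cdot D/\lambda^{\ell-1}$; since the $r_\ell$ sweep a geometric sequence of windows, a fixed width-$D$ interval meets at most two of them, giving $\sum_\ell \lambda^\ell \Pr_\beta[w\text{ cuts at level }\ell]\le \frac{2\lambda^2}{\lambda-1}D = O(D)$. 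Finally order $V=\{w^{(1)},w^{(2)},\dots\}$ by increasing $\min\{d(w,u),d(w,v)\}$: if $w^{(k)}$ cuts at some level, each of $w^{(1)},\dots,w^{(k-1)}$ is at least as close to $\{u,v\}$ and would itself claim $u$ or $v$ at that level, so $w^{(k)}$ must precede all of them in $\pi$, an event of probability $1/k$ independent of $\beta$. Combining, $\E[d_T(u,v)] \le \sum_{k=1}^n \frac{O(D)}{k} = O(\log n)\cdot D$.

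The real difficulty is the bookkeeping in the last paragraph: decoupling the two randomness sources in the right order (condition on $\beta$, which fixes which vertices can claim at each level, then take probability $\le 1/k$ over $\pi$), verifying that summing over levels $\ell$ costs only a constant rather than a logarithmic factor — precisely why a single global $\beta$ is used instead of independent per-level radii — and checking that padding singleton clusters down to level $0$ leaves all distances unchanged. The passage from base $2$ to a general constant $\lambda$ only replaces $2$ by $\lambda$ in the geometric sums and inflates the final constant by $O(1/(\lambda-1))$, harmless for fixed $\lambda$. Together (a) and (b) give Theorem~\ref{thm:FRT}.
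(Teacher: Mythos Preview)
The paper does not prove Theorem~\ref{thm:FRT} at all: it is stated as the classical tree-embedding result of Fakcharoenphol, Rao and Talwar and simply cited to~\cite{fakcharoenphol2004tight}, then used as a black box in the proof of Theorem~\ref{thm:reduction}. Your proposal, by contrast, reproduces the standard FRT construction and analysis (random permutation plus a single global scale $\beta$, hierarchical ball-carving), adapted from base $2$ to base $\lambda$; this is exactly the argument behind the cited reference, so in that sense you are supplying the proof the paper outsources.

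One small inaccuracy worth flagging: the claim that ``a fixed width-$D$ interval meets at most two'' of the windows $[\lambda^{\ell-2},\lambda^{\ell-1})$ is not literally true --- if $a_w=\min\{d(w,u),d(w,v)\}$ is small compared to $D$, the interval $I_w=[a_w,a_w+D]$ can span $\Theta(\log_\lambda(D/a_w))$ windows. The conclusion $\sum_\ell \lambda^\ell\,\Pr_\beta[r_\ell\in I_w]=O(D)$ is nevertheless correct, but for a different reason: on the ``interior'' levels where the window lies entirely inside $I_w$ the probability is $1$, yet those levels satisfy $\lambda^{\ell}=O(D)$ (since the window width $\lambda^{\ell-2}(\lambda-1)$ is at most $|I_w|\le D$), so the contributions form a geometric series summing to $O(D)$; the at most two boundary levels contribute $O(D)$ each by your density bound. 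With that correction the sketch goes through.
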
		
		(\ref{thm:FRT}a) says that the metric $d_T$ restricted to $V$ is always \emph{non-contracting} compared to the metric $d$.  On the other hand (\ref{thm:FRT}b) says that $d_T$ ``expands'' only by a factor of $O(\log n)$ in expectation.  Theorem~\ref{thm:FRT} allows us to reduce many combinatorial problems with min-sum objective on general metrics to those on metrics induced by HSTs.  We use the UFL problem as an in this proof.
		
		Let $\lambda > 1$ be a small enough constant as in Theorem 1.
		Given the UFL instance $(V, d, f, \vec N)$, our algorithm shall randomly generate a $\lambda$-HST $T$ for the input metric $(V, d)$ using the algorithm in Theorem~\ref{thm:FRT}. Then we solve the instance $(V, d_T|_V, f, \vec N)$ using the algorithm $\calA$ to obtain a set $S$ of open facilities.\footnote{Recall that in the super-set output setting, $S$ is not the returned set, but the set of facilities that are connected by at least 1 client. One small caveat is that for the original instance $(V, d, f, \vec N)$, given a set $R$ of facilities returned by the algorithm, we should use the tree metric $d_T$ to decide how to connect the clients, instead of the original metric $d$. Thus, along with the set $R$, the algorithm should also return the HST $T$.} One show that the final approximation ratio we obtain for the original instance $(V, d, f, \vec N)$ will be $O(\log n)$ times that for the instance $(V, d_T|_V, f, \vec N)$; moreover the $\epsilon$-DP property carries over in the reduction. 
		%
		%
		\begin{claim}
			The expected cost of $S$ to the original instance $(V, d, f, \vec N)$ is at most $O(\alpha_{\tree}\cdot \log n)$ times that of the optimum solution of $(V, d, f, \vec N)$.
		\end{claim}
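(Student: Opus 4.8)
The plan is to chain three comparisons: $\cost_d(S;\vec N)\le \cost_{d_T}(S;\vec N)$ pointwise (with probability $1$), then $\E_\calA[\cost_{d_T}(S;\vec N)]\le \alpha_\tree(n,\epsilon)\cdot \opt_{d_T}$ for each fixed FRT tree $T$, and finally $\E_T[\opt_{d_T}]\le O(\log n)\cdot \opt_d$, where $\opt_d:=\min_{S'\subseteq V}\cost_d(S';\vec N)$ and $\opt_{d_T}:=\min_{S'\subseteq V}\cost_{d_T}(S';\vec N)$ is the optimum of the tree instance $(V,d_T|_V,f,\vec N)$. After a harmless rescaling that makes the minimum nonzero distance of $(V,d)$ at least $1$ (this changes no approximation ratio), Theorem~\ref{thm:FRT} applies to the random tree $T$ our algorithm generates.

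First I would record two consequences of Theorem~\ref{thm:FRT}. From (\ref{thm:FRT}a), $d(v,A)\le d_T(v,A)$ for every leaf $v$ and every $A\subseteq V$ (apply the pointwise inequality to the $d_T$-nearest point of $A$), so for any \emph{fixed} facility set $A$ we have $\cost_d(A;\vec N)\le \cost_{d_T}(A;\vec N)$ with probability $1$. From (\ref{thm:FRT}b), for every $v$ and every fixed $A$, letting $s\in A$ be a $d$-closest point of $A$ to $v$, $\E_T[d_T(v,A)]\le \E_T[d_T(v,s)]\le O(\log n)\,d(v,s)=O(\log n)\,d(v,A)$.

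Second I would bound the expected optimum of the tree instance by plugging a fixed optimal solution $S^*$ of $(V,d,f,\vec N)$ into the tree instance and summing the second bound over all clients:
\begin{align*}
\E_T[\opt_{d_T}]\ \le\ \E_T\bigl[\cost_{d_T}(S^*;\vec N)\bigr]\ &=\ |S^*|f+\sum_{v\in V}N_v\,\E_T[d_T(v,S^*)]\\
&\le\ |S^*|f+O(\log n)\sum_{v\in V}N_v\,d(v,S^*)\ \le\ O(\log n)\,\opt_d.
\end{align*}
Third, for each fixed $T$ the assumed guarantee of $\calA$ gives $\E_\calA[\cost_{d_T}(S;\vec N)]\le \alpha_\tree(n,\epsilon)\cdot \opt_{d_T}$ (here $\cost_{d_T}(S;\vec N)$ is the cost in the super-set setting, which is legitimate since the closest-facility rule makes the connection cost of $S$ equal to that of the returned superset $R$). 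Since $T$ is generated independently of $\vec N$, taking expectation over $T$ as well and combining with the previous step yields $\E_{T,\calA}[\cost_{d_T}(S;\vec N)]\le \alpha_\tree(n,\epsilon)\cdot \E_T[\opt_{d_T}]\le O(\log n)\cdot\alpha_\tree(n,\epsilon)\cdot \opt_d$; applying the pointwise inequality $\cost_d(S;\vec N)\le \cost_{d_T}(S;\vec N)$ from the first step and taking expectations gives $\E[\cost_d(S;\vec N)]\le O(\alpha_\tree(n,\epsilon)\cdot\log n)\,\opt_d$, which is the claim.

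The only genuine obstacle is bookkeeping the two independent sources of randomness in the right order: the approximation guarantee of $\calA$ must be invoked \emph{conditioned} on a fixed $T$, so that $\opt_{d_T}$ is a fixed number, and only afterwards may one average over the FRT randomness. This is valid precisely because the FRT construction never inspects $\vec N$ — the same observation is what makes the reduction preserve $\epsilon$-DP (the tree is data-independent and the final connection is a post-processing of $\calA$'s $\epsilon$-DP output). The rescaling needed to meet the hypothesis of Theorem~\ref{thm:FRT} is the only other technicality.
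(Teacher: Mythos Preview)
Your proof is correct and follows essentially the same route as the paper's own argument: both chain the non-contracting property (\ref{thm:FRT}a) to get $\cost_d(S)\le\cost_{d_T}(S)$, invoke the $\alpha_\tree$ guarantee of $\calA$ conditioned on a fixed $T$, and then use (\ref{thm:FRT}b) applied to a fixed optimum $S^*$ to bound $\E_T[\cost_{d_T}(S^*)]\le O(\log n)\,\opt_d$. Your extra remarks on rescaling and on the order in which the two layers of randomness are averaged are sound and match the paper's handling (the paper notes the data-independence of $T$ only when discussing $\epsilon$-DP, but it is implicitly used in the cost argument exactly as you say).
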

		\begin{proof}
			Let $S^* \subseteq V$ be the optimum solution to the instance $(V, d, f, \vec N)$. Then, we have 
			\begin{align*}
				\E_{T} \cost_{d_T}(S^*) \leq O(\log n) \cost_d(S^*).
			\end{align*}
			Above, $\cost_{d_T}(S^*)$ is the cost of the solution $S^*$ w.r.t instance $(V, d_T|_V, f, \vec N)$ and $\cost_d(S^*)$ is the cost of the solution $S^*$ w.r.t $(V, d, f, \vec N)$.  The expectation is over the randomness of $T$. The equation simply follows from property (\ref{thm:FRT}b). 
			
			If the algorithm for the instance $(V, d_T|_V, f, \vec N)$ is an $\alpha_\tree$-approximation, then we have 
			\begin{align*}
				\E[\cost_{d_T}(S)|T] \leq \alpha_\tree \cdot\cost_{d_T}(S^*),
			\end{align*}  
			where $S$ is the solution output by the algorithm when the chosen tree is $T$. This holds since the optimum cost to the instance $(V, d_T|_V, f, \vec N)$ is at most $\cost_{d_T}(S^*)$. Then, using property (\ref{thm:FRT}a), i.e, the metrics are non-contracting, we have $\cost_d(S) \leq \cost_{d_T}(S)$. Thus, we have 
			\begin{align*}
				\E[\cost_{d}(S)|T] \leq \alpha_\tree \cdot \cost_{d_T}(S^*).
			\end{align*}
			Taking the expectation over the randomness of $T$, we have
			\begin{flalign*}
				&& \E_{T}\cost_d(S) = \E_T\big[\E[\cost_d(S)|T]\big] \leq \alpha_\tree \cdot \E_{T} \cost_{d_T}(S^*) \leq O(\alpha_\tree \cdot \log n)\cost_d(S^*). && 
			\end{flalign*}
		\end{proof}
		
		For the $\epsilon$-DP property of the algorithm, it suffices to observe that the random variable $T$ is independent of $\vec N$, and for any fixed $T$, the algorithm $\calA$ on $T$ is $\epsilon$-DP.  This finishes the proof of Theorem 1.

\end{proof}

\end{document}